\author{Marcelo Brutti Righi$^{a,}$\footnote{Corresponding author. $^{a}$Business School, Federal University of Rio Grande do Sul, Washington Luiz, 855, Porto Alegre, Brazil, zip 90010-460.}\\  \small{ \href{mailto:marcelo.righi@ufrgs.br}{marcelo.righi@ufrgs.br}} 
	\and Marlon Ruoso Moresco$^{a}$\\\small{\href{mailto:marlon.moresco@ufrgs.br}{marlon.moresco@ufrgs.br}}}
\date{}
\title{Star-Shaped deviations}
\newtheorem{Def}{Definition}
\newtheorem{Thm}[Def]{Theorem}
\newtheorem{Prp}[Def]{Proposition}
\newtheorem{Crl}[Def]{Corollary}
\theoremstyle{definition}
\newtheorem{Exm}[Def]{Example}
\theoremstyle{remark}
\newtheorem{Rmk}[Def]{Remark}
\DeclareMathOperator*{\essinf}{ess\,inf}
\DeclareMathOperator*{\esssup}{ess\,sup}
\begin{document}
	
	\maketitle
	\begin{abstract}
		We propose the Star-Shaped deviation measures in the same vein
		as Star-Shaped risk measures and Star-Shaped acceptability indexes. We characterize Star-Shaped deviation measures through Star-Shaped 
		acceptance sets and as the minimum of a family of Convex deviation measures.  We also expose an interplay between Star-Shaped risk measures and deviation measures.
	\end{abstract}	
	\smallskip
	\noindent \textbf{Keywords}: Deviation measures, risk measures, Star-Shaped sets, acceptance sets.
	
	\section{Introduction}

	Since \cite{Artzner1999} the axiomatization of risk measures has gained space in the literature. Their seminal paper argues that a ``coherent'' risk measure should satisfy four properties, among them, Positive Homogeneity. This property implies that the risk of a position is proportional to its size, i.e., for a risk measure $\rho$ and a positive real $\lambda$, it follows that $\rho(\lambda X) = \lambda \rho (X).$ However, Positive Homogeneity quickly came under criticism, mainly because the size of a financial position can affect the position risk due to liquidity risk, i.e., potential losses from difficulty into negotiating larger positions. In this sense, \cite{Follmer2002}, \cite{Frittelli2002}, \cite{Acerbi2008} and \cite{Lacker2018} argue against the
	Positive Homogeneity and sub-additivity assumptions adopted in the framework of coherent risk measures by focusing on Convexity, since, under $\rho(0)\leq0$, it implies that for $\lambda \geq 1,$ $\rho(\lambda X) \geq \lambda \rho (X)$. 
	
	Nonetheless, Convexity is actually stronger than purely demanding  $\rho(\lambda X) \geq \lambda \rho (X)$ for $\lambda \geq 1$. The latter is called Star-Shapedness and is the focus of this study. 
	In this sense, \cite{Castagnoli2021} proposes the class of Star-Shaped risk measures. This nomenclature comes from the Star-Shaped property of the generated acceptance set. The reasoning for Star-Shapedness as sensible axiomatic requirement is that if a position is acceptable, any scaled reduction of it also is. The key-point in this theory is that a monetary risk measure is Star-Shaped if and only if it is the minimum of a family of Convex risk measures. This class gained some attention in the literature when  
	\cite{Liebrich2021} explores allocations of Star-Shaped risk measures,  \cite{Moresco2021} relate them to the broader class of monetary risk measures, \cite{Herdegen2021} consider portfolio optimization and arbitrage, and \cite{Righi2021b} explores the interplay with Star-Shaped acceptability indexes. 
	
	The arguments exposed in their work affect deviation measures in the same way as they affect monetary risk measures. Such concept of deviation is axiomatized for Convex functionals in \cite{Rockafellar2006}, \cite{Pflug2006} and \cite{Grechuk2009}. The main idea is to consider generalizations of the standard deviation and similar measures in an axiomatic fashion. See \cite{Pflug2007} and \cite{Rockafellar2013} for a comprehensive review. Recently, \cite{Righi2016}, \cite{Righi2018a} and \cite{Righi2018b} explore advantages of a more complete analysis that considers both risk and deviation measures. 
	
	Thus, it is reasonable to generalize both Positive Homogeneity and Convexity for deviation measures by Star-Shapedness. Therefore, in this paper, we define and explore the class of Star-Shaped deviation measures. In doing so, we obtain that a deviation measure is Star-Shaped if and only if it is the minimum of a family of Convex deviation measures. This result is obtained with distinct techniques from the one for risk measures since deviation measures do not fulfill the property of Monotonicity, which is crucial in the paper of \cite{Castagnoli2021}. We develop a concept of acceptance set for deviation measures and show an interplay between Star-Shapedness for such sets and deviation measures. We also expose an interplay between Star-Shaped risk measures and deviation measures.
	
  The intuitive reasoning is similar to the one for monetary risk measures since a reduction by scaling of a position must not produce larger deviations. Further, \cite{Righi2018a} and \cite{Nendel21} study, respectively, compositions and decompositions of required capital and insurance premium into a risk measure and a deviation
	measure, which can be understood as the actuarial safety margin. Then, the representation of a Star-Shaped deviation as the minimum of Convex deviations
	suggests that using Star-Shaped deviation measures decreases the premium costs.

	We consider a probability space $(\Omega,\mathcal{F},\mathbb{P})$. All equalities and inequalities are in the $\mathbb{P}-a.s.$ sense.  
	Let $L^{\infty}=L^{\infty}(\Omega,\mathcal{F},\mathbb{P})$ be the spaces of (equivalence classes under $\mathbb{P}-a.s.$ equality of) finite and essentially bounded random variables. We consider in  $L^\infty$ its strong topology. We say that a set $\mathcal{A} \subseteq L^\infty $ is Star-Shaped if $ \lambda \mathcal{A} \subseteq \mathcal{A}$ for all $\lambda \in [0,1]$, or equivalently, $X \in \mathcal{A}$ implies that $\lambda X \in \mathcal{A}$ for all $\lambda \in [0,1]$.  We denote, by $E[X]$, $F_{X}$, and $F_{X }^{-1}$, the expected value, the (increasing and right-continuous) cumulative probability function, and its left quantile for $X\in L^\infty$ with respect to $\mathbb{P}$.  We denote that $X$ and $ Y$ have the same distribution by $X \sim Y$. The notation $X\succeq Y$, for $X,Y\in L^\infty$, indicates stochastic dominance. For second-order it means $E[f(X)]\leq E[f(Y)]$ for any increasing Convex function $f\colon\mathbb{R}\rightarrow\mathbb{R}$, while for Convex order it means $E[X]=E[Y]$ and $E[f(X)]\leq E[f(Y)]$ for any Convex function.
We define $1_A$ as the indicator function for an event $A\in\mathcal{F}$. We identify constant random variables with real numbers.
	
	
	We begin by exposing the theoretical properties that appear in the literature regarding deviation measures, and we consider them in this paper.
	
	\begin{Def}\label{def:dev}
		A functional $\mathcal{D}:L^\infty\rightarrow\mathbb{R}_+ \cup\{ \infty\}$ is a deviation measure. It may fulfill the following properties:
		\begin{enumerate}
			\item Non-Negativity: For all $X\in L^\infty$, $\mathcal{D}(X)=0$ for constant $X$ and $\mathcal{D}(X)>0$ for
			non-constant X;
			\item Translation Insensitivity: $\mathcal{D}(X+C)=\mathcal{D}(X),\:\forall \:X\in L^\infty,\:\forall\:C \in\mathbb{R}$;
			\item Convexity: $\mathcal{D}(\lambda X+(1-\lambda)Y)\leq \lambda \mathcal{D}(X)+(1-\lambda)\mathcal{D}(Y),\:\forall\: X,Y\in L^\infty,\:\forall\:\lambda\in[0,1]$;
			\item Positive Homogeneity: $\mathcal{D}(\lambda X)=\lambda \mathcal{D}(X),\:\forall\:X\in L^\infty,\:\forall\:\lambda \geq 0$;
			\item Star-Shapedness:  $\mathcal{D}(\lambda X)\geq \lambda\mathcal{D}(X),\:\forall\: X\in L^\infty,\:\forall\:\lambda \geq 1$.
			\item Lower Range Dominance:   $\mathcal{D}(X)\leq E[X]-\operatorname{ess}\inf X,\:\forall\:X\in L^\infty$;
			\item Law Invariance: If $F_X=F_Y$, then $\mathcal{D}(X)=\mathcal{D}(Y),\:\forall\:X,Y\in L^\infty$.
		\end{enumerate}
		A deviation measure $\mathcal{D}$ is called proper if it fulfills (i) and (ii);  Convex if it is proper and respects (iii); generalized (also called Coherent) if it is Convex and fulfills (iv); Star-Shaped if it is proper and fulfills (v); Lower Range Dominated if it satisfies (vi) and Law Invariant if it has (vii).
	\end{Def}
	
	
	\begin{Rmk}
		It is straightforward to prove that, for any $X\in L^\infty$, the following is equivalent:
		\begin{enumerate*}[label=(\roman*)]
			\item $\mathcal{D}$ is Star-Shaped;
			\item  $\mathcal{D}(\lambda X)\leq\lambda\mathcal{D}(X)$ for any $0\leq\lambda\leq 1$;
			\item $\lambda\to\frac{\mathcal{D}(\lambda X)}{\lambda}$ is non-decreasing.
		\end{enumerate*}
		Moreover, for a proper sub-additive deviation measure, i.e. $\mathcal{D}(X+Y)\leq\mathcal{D}(X)+\mathcal{D}(Y),\:\forall\:X,Y\in L^\infty$, it is easy to see the equivalence between the following:
		\begin{enumerate*}[label=(\roman*)]
			\item $\mathcal{D}$ is Star-Shaped;
			\item  $\mathcal{D}$ is Positive Homogeneous;
			\item $\mathcal{D}$ is a generalized deviation measure.
		\end{enumerate*}
	\end{Rmk}

	\begin{Exm}\label{Exm:devmeas}
		We now expose some examples of Star-Shaped deviation measures. Recall that for proper deviation measures, Star-Shapedness is implied by both Positive Homogeneity and Convexity. Further, some measures below do not satisfy $D(X)>0$ for non-constant $X$, but only $D(C)=0$ for any $C\in\mathbb{R}$. In some papers, such as \cite{Bellini2020}, they are called
		variability measures or dispersion measures. Such maps can attend Non-Negativity when added to a proper deviation measures.
		\begin{enumerate}
			\item Standard Deviation (SD): This is perhaps the most well-known measure of variability, being defined as $SD(X)=E\left[(X-E[X])^2\right]^{\frac{1}{2}}$. It is a generalized deviation measure that represents the second moment around expectation and has been considered a proxy for risk in modern finance since the pioneering work of \cite{Markowitz1952}. The SD inspires the whole conception of deviation measures, once the symmetry is dropped. This is important as dispersion from gains and losses have distinct impacts. The asymmetric forms of the SD are the lower and upper semi-deviations (SD$_{-}$/SD$_{+}$). They consider dispersion only from values, respectively, below or above the expectation to avoid symmetry. This is necessary as not all dispersion in a financial position is undesirable, in fact, a result above its expected return is in general beneficial. They are defined as $SD_-(X)=(E\left[((X-E[X])^-)^2\right])^{\frac{1}{2}}$ and $SD_+(X)=(E\left[((X-E[X])^+)^2\right])^{\frac{1}{2}}$.
			\item Full Range (FR): This extremely conservative generalized deviation measure is defined as $FR(X)=\esssup X-\essinf X$ and represents the larger possible difference for two realizations of $X$. 
			Due to the conservatism of the FR, Lower and Upper Range (LR/UR) arise as adaptations to consider the range below or above the expectation, respectively. They are defined as $LR(X)=E[X]-\essinf X$ and $UR(X)=\esssup X-E[X]$. The idea is similar to that for SD$_{-}$ and SD$_{+}$.

			\item Loss Value at Risk Deviation (LVaRD): This is a concept derived from the  Loss VaR of \cite{Bignozzi2020}, it is defined as $LVaRD_\alpha (X) = \sup_{u \geq 0}\{-F^{-1}_{(X-E[X])}(\alpha(u)) -u\} , $ where $ \alpha \colon [0,\infty) \rightarrow (0,1]$  is an increasing and right-continuous function which represents some benchmark loss. It is easy to check that, LVaRD is a variability measure not Convex and neither it is positively homogeneous unless $\alpha$ is constant. However,  it is Star-Shaped and Translation Insensitive.

			\item Regular based Deviation (RbD): Let $f:L^\infty\rightarrow\mathbb{R}$ be Monotone and Star-Shaped, with   $f(X)\geq -E[X]$ and $f(X) = -E[X]$ if and only if $X$ is constant. Then we have that $\mathcal{D}_f(X)=f(X-E[X])$ is the Star-Shaped deviation induced by $f$. See  \Cref{prp:dev_risk} for a concrete example. Under the same $f$, we can define the  Lower and Upper Regular based Deviation (LD/UD) as $LD_f(X)=f((X-E[X])^-)$ and   $UD_f(X)=f((X-E[X])^+)$, the intuition behind those is the same as for the semi-deviations and Upper/Lower Ranges. Note that all the previous examples of deviations are special cases of this approach. In the first example $f$ is the $L^2$ norm, where $\mathcal{D}_f$ is the standard deviation, $SD_- = LD_f$ and $SD_+ = UP_f$. In the second example for the LR, we have $f(X)=-\essinf X$ and for UR, $f (X) = \esssup X$. Lastly, in the LVaRD, we have the risk measure LVaR doing the role of $f$. Further,  we can define $\mathcal{D}(X) = \min (UD_f (X), LD_f (X))$. This is a regularization of the ranges and is a Star-Shaped deviation measure.
			
			\item  Loss-Deviation (LD): This measure is linked to the dispersion of results worse than a benchmark, typically a risk measure, measured by usual $p$-norms. This concept is explored by \cite{Righi2016} and \cite{Righi2019}. Let $f : L^{\infty}\rightarrow \mathbb{R}$ be Monotone and Positive Homogeneous such that  $f(X+c)=f(X)+c$ for any $X\in L^\infty$ and any $c\in\mathbb{R}$. Then, and  its loss-deviation is $LD(X)= \lVert (X-f(X))^-\rVert_p,\:p\in[1,\infty]$. This deviation is a generalization of the lower semi-deviation, and it is not Convex for any concave $f$, except for the negative expectation. However, it is Star-Shaped since it is Positive Homogeneous.
			\item Minkowski Deviation ($MD_\mathcal{A}$): Given an acceptance set $\mathcal{A}$ that is Star-Shaped, radially bounded for non-constants and stable under scalar addition (see \cite{Moresco2021b} for precise definition and financial intuition of those properties), the Minkowski deviation is defined as $MD_\mathcal{A} (X) = \inf \{ m >0  \colon \frac{X}{m} \in \mathcal{A} \}$. Any Positive Homogeneous deviation measure $\mathcal{D}$ is a Minkowski Deviation by taking $\mathcal{A} = \{X \in L^\infty \colon \mathcal{D} (X) \leq 1\} $. MD is Convex if and only if the acceptance set also is Convex. Related to our main result in \Cref{prp:dev}, by letting $\mathcal{A}_Y = \{\lambda Y + c \colon \lambda \in [0,1], c \in \mathbb{R} \}$,  Proposition 3.1 of \cite{Moresco2021b} gives $MD_\mathcal{A} (X) = MD_{\bigcup_{Y \in \mathcal{A}} \mathcal{A}_Y} (X) = \inf_{Y \in \mathcal{A}} MD_{\mathcal{A}_Y} (X).$ Nonetheless, their framework can not embrace deviation measures that are not Positive Homogeneous, in particular those that are the focus of this study.
			
			\item Iterquantile Deviation (IQD): Based on Value at Risk, defined as $VaR^\alpha (X) = -F_X^{-1}(\alpha), \alpha \in (0,1)$, the IQD  is a commonly used measure of dispersion in statistics, it measures the distance between two quantiles, for $\alpha \in (0,0.5) $ we have that $IQD^\alpha (X) = VaR^\alpha (X)- VaR^{1-\alpha} (X)$. This measure is not Convex, while it is Positive Homogeneous.  Hence, it is Star-Shaped. Furthermore, while it is Translation Insensitive, it is only Non-Negative under a mixture with another proper deviation measure. This measure is studied in \cite{Wang2020c} and \cite{Bellini2020}. We have that $D^\alpha(X) = (IQD^\alpha (X) )^2 +SD(X) $ is a non-trivial, Star-Shaped deviation measure that is neither Convex nor Positive Homogeneous.
			
			\item Inter-ES Deviation (IED): This deviation is similar to the IQD, however, here the Expected Shortfall (ES) does the same role as the VaR in the IQD. For $\alpha \in (0,1)$, the ES is defined as $ES^\alpha (X) = \frac{1}{\alpha} \int_0^\alpha VaR^s (X) ds $. Then, we have that $IED^\alpha (X) = ES^\alpha (X)-ES^{1-\alpha} (X) $. This deviation measure in convex-order consistent, Law Invariant and has all properties of a generalized deviation measure, with exception for Non-Negativity. Again, this can be easily solved by adding it to a proper deviation measure. Furthermore, $IED^\alpha$ is the smallest Law Invariant, Translation Insensitive Convex functional dominating $IQD^\alpha$, see \cite{Wang2020c} Theorem 5 and Example 7. Again, we can easily derive a Star-Shaped deviation measure that is neither Convex nor Positive Homogeneous by squaring the IED and adding it to a Star-Shaped deviation measure.
			
		\end{enumerate}
	\end{Exm}

	\section{Results}\label{sec:results}
	
	\Cref{accep set} below is a direction on how to extend the concept of acceptance set for monetary risk measures to the framework of deviation measures. A more extensive study on such acceptance sets is beyond the scope of this paper and will be postponed for future research. Regarding an alternative approach based on Positive Homogeneity, see \cite{Moresco2021b}.
	
	\begin{Prp}\label{accep set}
		The following is equivalent for a deviation measure $\mathcal{D}\colon L^\infty\to\mathbb{R}_+\cup\{\infty\}$:
		\begin{enumerate}
			\item $\mathcal{D}$ is Star-Shaped.
			\item $\mathcal{A}_\mathcal{D} \coloneqq \{X\in L^\infty\colon \mathcal{D}(X)\leq E[X]\}$ is Star-Shaped.
			\item There is a Star-Shaped set $\mathcal{A}$ such that $\mathcal{D}(X)=  \mathcal{D}_{\mathcal{A}}(X) \coloneqq \inf\{m\in\mathbb{R}\colon X+m\in\mathcal{A}\}+E[X]$, for all $X \in L^\infty$.
		\end{enumerate}
	In this case we have that $\mathcal{D}_{\mathcal{A}_D}=\mathcal{D}$ and $\mathcal{A}\subseteq\mathcal{A}_{\mathcal{D}_\mathcal{A}}$.
	\end{Prp}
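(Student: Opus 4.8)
The plan is to prove the cycle $(1)\Rightarrow(2)\Rightarrow(3)\Rightarrow(1)$, letting the two auxiliary identities drop out along the way. Everything hinges on two structural facts linking a deviation to its acceptance set: Translation Insensitivity, which lets the map $\mathcal{A}\mapsto\mathcal{D}_\mathcal{A}$ invert the acceptance-set construction, and the characterization from the Remark that Star-Shapedness of $\mathcal{D}$ is equivalent to $\mathcal{D}(\lambda X)\leq\lambda\mathcal{D}(X)$ for all $\lambda\in[0,1]$. It is convenient to keep in mind that $\mathcal{D}_\mathcal{A}(X)=\inf\{m\in\mathbb{R}\colon X+m\in\mathcal{A}\}+E[X]$ splits into a cash-invariant part plus $E[X]$.

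For $(1)\Rightarrow(2)$ I would take $X\in\mathcal{A}_\mathcal{D}$, so $\mathcal{D}(X)\leq E[X]$, and fix $\lambda\in[0,1]$; by the Remark, $\mathcal{D}(\lambda X)\leq\lambda\mathcal{D}(X)\leq\lambda E[X]=E[\lambda X]$, hence $\lambda X\in\mathcal{A}_\mathcal{D}$ and the set is Star-Shaped. For $(2)\Rightarrow(3)$ I would simply take $\mathcal{A}=\mathcal{A}_\mathcal{D}$ and compute $\mathcal{D}_{\mathcal{A}_\mathcal{D}}$ directly: by Translation Insensitivity, $X+m\in\mathcal{A}_\mathcal{D}$ iff $\mathcal{D}(X)=\mathcal{D}(X+m)\leq E[X+m]=E[X]+m$, i.e. iff $m\geq\mathcal{D}(X)-E[X]$, so that $\inf\{m\colon X+m\in\mathcal{A}_\mathcal{D}\}=\mathcal{D}(X)-E[X]$ and therefore $\mathcal{D}_{\mathcal{A}_\mathcal{D}}(X)=\mathcal{D}(X)$. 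This is precisely the claimed identity $\mathcal{D}_{\mathcal{A}_\mathcal{D}}=\mathcal{D}$, and since it uses only Translation Insensitivity it holds for every proper deviation measure.

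For $(3)\Rightarrow(1)$ the key step is transferring Star-Shapedness from the set to the functional. If $X+m\in\mathcal{A}$, then Star-Shapedness of $\mathcal{A}$ gives $\lambda X+\lambda m=\lambda(X+m)\in\mathcal{A}$ for every $\lambda\in[0,1]$, so $\lambda m$ is admissible for $\lambda X$ and $\inf\{m'\colon\lambda X+m'\in\mathcal{A}\}\leq\lambda m$; taking the infimum over admissible $m$ (using $\lambda\geq 0$) yields $\inf\{m'\colon\lambda X+m'\in\mathcal{A}\}\leq\lambda\inf\{m\colon X+m\in\mathcal{A}\}$, and adding $\lambda E[X]=E[\lambda X]$ gives $\mathcal{D}(\lambda X)\leq\lambda\mathcal{D}(X)$, i.e. property (v). A shift-of-variable computation shows Translation Insensitivity of $\mathcal{D}_\mathcal{A}$ is automatic, so that together with (v) and Non-Negativity one recovers Star-Shapedness. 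The remaining inclusion $\mathcal{A}\subseteq\mathcal{A}_{\mathcal{D}_\mathcal{A}}$ is then immediate: for $X\in\mathcal{A}$ the choice $m=0$ is admissible, hence $\inf\{m\colon X+m\in\mathcal{A}\}\leq 0$ and $\mathcal{D}_\mathcal{A}(X)\leq E[X]$, which is exactly $X\in\mathcal{A}_{\mathcal{D}_\mathcal{A}}$.

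I expect the genuine obstacle to be bookkeeping rather than any hard estimate: since ``Star-Shaped'' for a deviation measure means proper plus (v), I must ensure that each condition carries both Translation Insensitivity and Non-Negativity. Translation Insensitivity is free for any $\mathcal{D}_\mathcal{A}$, but Non-Negativity is the delicate point, and I would either treat properness as a standing hypothesis on $\mathcal{D}$ or impose the minimal structure on the generating $\mathcal{A}$ that separates constants from non-constants. I would also flag explicitly that $(3)\Rightarrow(1)$ delivers only the inclusion $\mathcal{A}\subseteq\mathcal{A}_{\mathcal{D}_\mathcal{A}}$ and not equality, precisely because a generating set need not contain every position it accepts at level zero; this is why the final clause of the statement is asymmetric.
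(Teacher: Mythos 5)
Your proposal is correct and follows essentially the same route as the paper: the same cycle $(1)\Rightarrow(2)\Rightarrow(3)\Rightarrow(1)$, the same Translation-Insensitivity computation showing $\mathcal{D}_{\mathcal{A}_\mathcal{D}}=\mathcal{D}$, and the same admissibility argument (the paper phrases it as $\mathcal{A}\subseteq\frac{1}{\lambda}\mathcal{A}$) transferring Star-Shapedness from the set to the functional, with the inclusion $\mathcal{A}\subseteq\mathcal{A}_{\mathcal{D}_\mathcal{A}}$ obtained identically via $m=0$. Your closing remark about Non-Negativity being the delicate point is apt: the paper's proof of $(3)\Rightarrow(1)$ likewise establishes only property (v) and implicitly treats properness of $\mathcal{D}$ as a standing hypothesis, exactly the resolution you propose.
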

	
	\begin{proof}
		(i)$\implies$(ii). Note that $X\in\mathcal{A}_\mathcal{D}$ if and only if $0 \leq \mathcal{D}(X) \leq E[X]$. Then, for any $X \in \mathcal{A}$ and  $\lambda\in[0,1]$ we have that $\mathcal{D}(\lambda X)\leq \lambda \mathcal{D}(X) \leq \lambda E[X]$. Thus, $\lambda X \in \mathcal{A}$ which implies that $\mathcal{A}_\mathcal{D}$ is Star-Shaped.
		
		(ii)$\implies$(iii). Let $\mathcal{A}=\mathcal{A}_\mathcal{D}$. Then
		\begin{align*}  \mathcal{D}_{\mathcal{A}} (X) - E[X] &= \inf\{m\in\mathbb{R}\colon X+m\in\mathcal{A}_\mathcal{D}\}\\&=\inf\{m\in\mathbb{R}\colon-E[X+m]+\mathcal{D}(X+m)\leq 0\}\\
		&=\inf\{m\in\mathbb{R}\colon-E[X]+\mathcal{D}(X)\leq m\}\\
		&=\mathcal{D}(X) - E[X].
		\end{align*}
		
		(iii)$\implies$(i). Let $\lambda\in(0,1]$ the case where $\lambda = 0$ is trivial. Note that if $\mathcal{A} \subseteq \frac{1}{\lambda} \mathcal{A}$, then for any $X\in L^\infty$ we have that \begin{align*}
		\mathcal{D}(\lambda X)&=\inf\{m\in\mathbb{R}\colon \lambda X+m\in\mathcal{A}\}+E[\lambda X]\\
		&=\lambda(\inf\{m\in\mathbb{R}\colon  \lambda(X+m)\in\mathcal{A}\}+E[X])\\
		&\leq\lambda(\inf\{m\in\mathbb{R}\colon  X+m\in\mathcal{A}\}+E[X])=\lambda\mathcal{D}(X).
		\end{align*}
	Moreover,	$\mathcal{D}_{\mathcal{A}_D}=\mathcal{D}$ is trivial from previous items. Further, let $X\in\mathcal{A}$. Then \[0\geq\inf\{m\in\mathbb{R}\colon X+m\in\mathcal{A}\}=\mathcal{D}_\mathcal{A}(X)-E[X].\]
	Hence, $X\in\mathcal{A}_{\mathcal{D}_\mathcal{A}}$. 	

	\end{proof}

	We now show our main result below in \Cref{prp:dev} by showing a similar result to Theorem 5 in \cite{Castagnoli2021}. However, note that beyond Star-Shapedness and Convexity, which play a key role in the \nameCref{prp:dev}, we ask nothing but the most basic axioms of deviation measures, namely Non-Negativity and Translation Insensitivity. While in \cite{Castagnoli2021}, they also demand Normalization, the role of Normalization was explored in \cite{Moresco2021}.

	\begin{Thm}\label{prp:dev}
		$\mathcal{D}\colon L^\infty\to\mathbb{R}_+\cup\{\infty\}$ is a Star-Shaped deviation measure if and only if there is a family $\{\mathcal{D}_i\}_{i\in\mathcal{I}}$ of Convex deviation measures such that the representation  \begin{equation}\label{eq:dev_min}
		\mathcal{D}(X)=\min_{i\in\mathcal{I}}\mathcal{D}_i(X),\:\forall\:X\in L^\infty,
		\end{equation}
	holds.	Moreover, such family can be chosen as the one composed by the Convex deviation measures that dominate $\mathcal{D}$, i.e. \[\mathcal{I}=\{\beta\colon L^\infty\to\mathbb{R}_+\cup\{\infty\}\colon\beta\:\text{Convex deviation measure and}\:\beta\geq\mathcal{D}\}.\]
	\end{Thm}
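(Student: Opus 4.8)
The plan is to prove the two implications separately, with the reverse implication (min-representation $\Rightarrow$ Star-Shaped deviation) being routine and the forward direction carrying the real content.

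For the reverse direction, suppose $\mathcal{D}=\min_{i\in\mathcal{I}}\mathcal{D}_i$ with each $\mathcal{D}_i$ a Convex deviation measure. Translation Insensitivity and Non-Negativity pass through the pointwise minimum immediately; the only point worth recording is that, since the representation is a genuine minimum (attained) and each $\mathcal{D}_i(X)>0$ for non-constant $X$, the value $\mathcal{D}(X)$ is strictly positive for non-constant $X$. For Star-Shapedness I would first note that every Convex deviation is itself Star-Shaped: from $\mathcal{D}_i(0)=0$ and Convexity, writing $X=\tfrac{1}{\lambda}(\lambda X)+(1-\tfrac{1}{\lambda})\,0$ for $\lambda\geq 1$ gives $\mathcal{D}_i(X)\leq\tfrac{1}{\lambda}\mathcal{D}_i(\lambda X)$, i.e. $\mathcal{D}_i(\lambda X)\geq\lambda\mathcal{D}_i(X)$. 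Since then $\mathcal{D}_i(\lambda X)\geq\lambda\mathcal{D}_i(X)\geq\lambda\mathcal{D}(X)$ for each $i$, taking the minimum over $i$ yields $\mathcal{D}(\lambda X)\geq\lambda\mathcal{D}(X)$, as required.

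For the forward direction, let $\mathcal{I}$ be the family of all Convex deviation measures dominating $\mathcal{D}$. Every $\beta\in\mathcal{I}$ satisfies $\beta\geq\mathcal{D}$, so $\inf_{\beta\in\mathcal{I}}\beta\geq\mathcal{D}$ pointwise, and the whole difficulty is to exhibit, for each fixed $X_0$, a single $\beta^{X_0}\in\mathcal{I}$ with $\beta^{X_0}(X_0)=\mathcal{D}(X_0)$; this simultaneously forces equality and shows the infimum is attained, hence a minimum. The key idea, exploiting that Star-Shapedness is exactly super-linearity of $\lambda\mapsto\mathcal{D}(\lambda X_0)$ along the ray through $X_0$, is to take the Convex deviation supported on that ray alone: using Translation Insensitivity to work modulo constants, I set $\beta^{X_0}(X)=s\,\mathcal{D}(X_0)$ whenever $X-E[X]=s\,(X_0-E[X_0])$ for some $s\in[0,1]$, and $\beta^{X_0}(X)=\infty$ otherwise. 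Its effective domain is the segment from $0$ to $X_0-E[X_0]$ in $L^\infty/\mathbb{R}$, on which it is affine, so $\beta^{X_0}$ is Convex; it is Non-Negative and vanishes only on constants; and it dominates $\mathcal{D}$ because off the ray its value is $\infty$, while on the ray $\mathcal{D}(sX_0)\leq s\,\mathcal{D}(X_0)=\beta^{X_0}(sX_0)$ is precisely the Star-Shapedness inequality for $s\in[0,1]$. Finally $\beta^{X_0}(X_0)=\mathcal{D}(X_0)$, giving tightness.

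The main obstacle I anticipate is not the algebra but checking that this deliberately degenerate object is admissible: one must verify it is a bona fide Convex deviation measure in the sense of \Cref{def:dev} (in particular strict positivity for non-constants and Convexity across the jump to $+\infty$ beyond $X_0$). It is instructive to contrast it with the positively homogeneous Minkowski deviation of \Cref{Exm:devmeas}(6): that functional is linear along the ray and therefore cannot dominate a genuinely super-linear $\mathcal{D}$, which is exactly why the ray must be truncated at $X_0$, with value $+\infty$ beyond it, rather than extended homogeneously. An alternative, more structural route would run through \Cref{accep set}, writing the Star-Shaped acceptance set $\mathcal{A}_\mathcal{D}$ as a union of convex subsets stable under scalar addition and taking the induced deviations $\mathcal{D}_{\mathcal{C}}$; but the ray construction above makes both the domination and the attainment of the minimum transparent.
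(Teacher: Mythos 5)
Your proposal is correct and is essentially the paper's own proof in different packaging: the ray functional $\beta^{X_0}$ you define coincides pointwise with the paper's $\mathcal{D}_Y$ (take $Y=X_0$), which is induced by the acceptance set $\mathcal{A}_Y=\{\lambda(Y-E[Y]+\mathcal{D}(Y))+m\colon \lambda\in[0,1],\,m\geq 0\}$ and evaluates to $\lambda\,\mathcal{D}(Y)$ exactly when $X-E[X]=\lambda(Y-E[Y])$ with $\lambda\in[0,1]$, and to $+\infty$ otherwise. The difference is purely presentational: you check Convexity, Non-Negativity, the domination $\beta^{X_0}\geq\mathcal{D}$ (via Star-Shapedness on $[0,1]$) and tightness at $X_0$ directly on the truncated ray, whereas the paper obtains the same facts from Convexity of $\mathcal{A}_Y$ and the inclusion $\mathcal{A}_Y\subseteq\mathcal{A}_\mathcal{D}$.
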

	
	\begin{proof}
		Let $\mathcal{D}$ be given as in \eqref{eq:dev_min}, we shall show that it is a Star-Shaped deviation measure. 
		For Non-Negativity  
		take nonconstant $X \in L^\infty$ and $i^* \in \mathcal{I}$ such that $\mathcal{D}(X) = \mathcal{D}_{i^*} (X)$.  By Non-Negativity of $\mathcal{D}_{i^*}$, $0<\mathcal{D}_{i^*}(X) = \mathcal{D}(X)$. One argues similarly for $X \in \mathbb{R}$.
		For Translation Insensitivity we have for any $m\in\mathbb{R}$ and any $X\in L^\infty$ that 
		\[\mathcal{D}(X+m)=\min_{i\in\mathcal{I}}\mathcal{D}_i(X+m)=\min_{i\in\mathcal{I}}\mathcal{D}_i(X)=\mathcal{D}(X).\] 
		For Star-Shapedness, let $\lambda\leq 1$. Then, as each $\mathcal{D}_i$ is Convex, \[\mathcal{D}(\lambda X)=\min_{i\in\mathcal{I}}\mathcal{D}_i(\lambda X)\leq \lambda\min_{i\in\mathcal{I}}\mathcal{D}_i(X)=\lambda\mathcal{D}(X).\]
		For the converse, let $\mathcal{D}$ be Star-Shaped deviation measure, $\mathcal{A}= \mathcal{A_D}$  and $\mathcal{D_A}$ be as defined in \Cref{accep set}. We then have that $
		\mathcal{D_A}(X) =\mathcal{D}(X)$.
		
		Now we will find a family of Convex deviation measure such that \cref{eq:dev_min} holds. For any $Y\in L^\infty$  we define 
		\[\mathcal{A}_{Y}=conv(\{Y-E[Y]+\mathcal{D}(Y)\}\cup\{0\})+\mathbb{R}_+ = \{ \lambda (Y - E[Y] + \mathcal{D}(Y) ) + m \colon \lambda \in [0,1], m\geq 0\},\] 
		and let $\mathcal{D}_{Y}(X)= \mathcal{D}_{A_Y} (X)$. 
		We have that each $\mathcal{A}_{Y}$ is Convex since $\mathbb{R}_+$ is a Convex cone.  It is easy to see  that $\mathcal{D}_Y(X)=\mathcal{D}(X)=0$ for any $X$ constant. For non-constant $X$, if $X\in\mathcal{A}_Y$, then $X=\lambda(Y-E[Y]+\mathcal{D}(Y)+k)$, where $k\in\mathbb{R}_+$ and $\lambda\in[0,1]$. In this case \begin{align*}
		E[-X]+\mathcal{D}(X)&=E[-\lambda(Y-E[Y]+\mathcal{D}(Y)+k)]+\mathcal{D}(\lambda(Y-E[Y]+\mathcal{D}(Y)+k))\\
		&=-\lambda(\mathcal{D}(Y)+k)+\mathcal{D}(\lambda Y)\\
		&\leq\lambda(\mathcal{D}(Y)-\mathcal{D}(Y)-k)=-\lambda k\leq0.
		\end{align*} Thus, $X\in\mathcal{A}$, which implies  $\mathcal{A}_Y\subseteq\mathcal{A}$ and, consequently, \[\mathcal{D}_Y(X)=\inf\{m\in\mathbb{R}\colon X+m\in\mathcal{A}_Y\}+E[X]\geq\inf\{m\in\mathbb{R}\colon X+m\in\mathcal{A}\}+E[X]=\mathcal{D}(X).\]
		Thus, $\mathcal{D}(X)\leq\inf\{\mathcal{D}_Y(X)\colon Y \in L^\infty\}$. Furthermore, since $X+E[-X]+\mathcal{D}(X)\in\mathcal{A}_X$, we have that \[\mathcal{D}_X(X)=\inf\{m\in\mathbb{R}\colon X+m\in\mathcal{A}_X\}+E[X]\leq E[-X]+\mathcal{D}(X)+E[X]=\mathcal{D}(X).\] Hence, we have that $\mathcal{D}(X)=\mathcal{D}_X(X)$ and $\mathcal{D}(X)=\min\{\mathcal{D}_Y(X)\colon Y \in L^\infty \}$.
		
		Now, we need to show that each $\mathcal{D}_Y$ defines a Convex deviation measure. When $Y$ is constant $\mathcal{A}_Y = \mathbb{R}_+$ and we have that $\mathcal{D}_Y (X) = 0$ if $X$ is constant and $\mathcal{D}_Y(X) = \infty$ otherwise. This obviously, defines a trivial generalized deviation measure.
		Therefore, we only have to show the deviation properties in case $Y$ is non-constant. For Translation Insensitivity, let $c\in\mathbb{R}$ and $X\in L^\infty$. Then \begin{align*}
		\mathcal{D}_Y(X+c)&=\inf\{m\in\mathbb{R}\colon X+c+m\in\mathcal{A}_Y\}+E[X+c]\\
		&=\inf\{m-c\in\mathbb{R}\colon X+m\in\mathcal{A}_Y\}+E[X]+c\\
		&=\inf\{m\in\mathbb{R}\colon X+m\in\mathcal{A}_Y\}+E[X]=\mathcal{D}_Y(X).
		\end{align*}
		Convexity follows because $\mathcal{A}_Y$ is Convex. 
		In fact, for any $\lambda\in[0,1]$ and any $X,Z\in L^\infty$ we have \begin{align*}
		&\lambda\mathcal{D}_Y(X)+(1-\lambda)\mathcal{D}_{Y}(Z)\\
		=&\inf\{\lambda m_1+(1-\lambda)m_2\in\mathbb{R}\colon X+m_1\in\mathcal{A}_Y,Y+m_2\in\mathcal{A}_Y\}+E[\lambda X]+E[(1-\lambda)Z]\\
		\geq &\inf\{\lambda m_1+(1-\lambda)m_2\in\mathbb{R}\colon \lambda (X+m_1)+(1-\lambda)(Z+m_2)\in\mathcal{A}_Y\}+E[\lambda X+(1-\lambda)Z]\\
		=&\inf\{ m\in\mathbb{R}\colon \lambda X+(1-\lambda)Z+m\in\mathcal{A}_Y\}+E[\lambda X+(1-\lambda)Z]=\mathcal{D}_Y(\lambda X+(1-\lambda)Z).
		\end{align*}
		For Non-Negativity, we already showed that if  $c\in\mathbb{R}$ then $\mathcal{D}_Y(c) = \mathcal{D} (c)=0$.
		For $X\in L^\infty$ non-constant, we have that 
		$\mathcal{D}_Y(X)\geq\mathcal{D}(X)>0$.
		Thus, each $\mathcal{D}_Y$ is a Convex deviation measure.  Moreover, let $\mathcal{I}=\{\beta\colon L^\infty\to\mathbb{R}_+\cup\{\infty\}\colon\beta\:\text{is Convex deviation and}\:\beta\geq\mathcal{D}\}$. We have that $\mathcal{D}(X)\leq\inf_{\mathcal{I}}\beta (X)$. Since $\mathcal{D}_X\in\mathcal{I}$, we have that $\mathcal{D}(X)=\min_{\mathcal{I}}\beta(X),\:\forall\:X\in L^\infty$.
	\end{proof}
	
\begin{Rmk} 
The minimum in \cref{eq:dev_min} can not be replaced by an infimum. In order to verify it, note that it could easily lead to a situation where for a non-constant random variable $X\in L^\infty$ we may have  $\inf_{i \in \mathcal{I}} \mathcal{D}_i(X) =0$. This directly conflicts to Non-Negativity, even if all $\mathcal{D}_i$ are deviation measures.
\end{Rmk}	
	
	\begin{Rmk}\label{Rmk:set}
		The set $\mathcal{I}$ in the representation of last Theorem is not unique. Nonetheless, under a relaxation we have some uniqueness result. For any set $\mathcal{I}$ of Convex deviation measures, define its relaxation as \[\mathcal{I}^{*}=\left\lbrace \beta\colon L^\infty\to\mathbb{R}_+\cup\{\infty\}\colon\beta\:\text{is Convex deviation measure and}\:\beta\geq\min\limits_{\mathcal{I}}\beta_i\right\rbrace .\] 
		Note that $\min_{\mathcal{I}} \beta_i$ is not necessarily well-defined. However, if it is well-defined and $\mathcal{D}=\min\limits_{\mathcal{I}_1}\beta_i=\min\limits_{\mathcal{I}_2}\beta_i$, then we directly have that \[\mathcal{I}^{*}_1=\mathcal{I}_2^{*}=\{\beta\colon L^\infty\to[0,\infty]\colon\beta\:\text{is Convex deviation measure and}\:\beta\geq\mathcal{D}\}.\] 
	\end{Rmk}
	
	The \nameCref{ph,li,lrd} below provides the same result as \Cref{prp:dev} but for different classes of deviation measures. 
	
	\begin{Crl}\label{ph,li,lrd}
		Let $\mathcal{D}\colon L^\infty\to\mathbb{R}_+\cup\{\infty\}$ be a Star-Shaped deviation measure represented under $\mathcal{I}$ in the context of  \Cref{prp:dev}. Then:
		\begin{enumerate}
			\item $\mathcal{D}$ is Positive Homogeneous if and only if there exists some $\mathcal{I}$  composed by generalized deviation measures such that \cref{eq:dev_min} holds.
			\item $\mathcal{D}$ is Lower Range Dominated if and only if there is some $\mathcal{I}$ composed by lower ranged deviation measures such that \cref{eq:dev_min} holds.
		\end{enumerate} 
		In any case, such families can be chosen, respectively, as the one composed by \[\mathcal{I}=\{\beta\colon L^\infty\to\mathbb{R}_+\cup\{\infty\}\colon\beta\:\text{ (PH or LRD) Convex deviation measure and}\:\beta\geq\mathcal{D}\}.\]
	\end{Crl}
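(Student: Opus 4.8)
The plan is to leverage Theorem \ref{prp:dev} directly, observing that the construction of the dominating family $\mathcal{I}$ already yields Convex deviation measures, so the task reduces to verifying that the additional structural property (Positive Homogeneity in part (i), Lower Range Dominance in part (ii)) is inherited by the members of the family when $\mathcal{D}$ itself enjoys it. For each part I would prove both implications. The "if" direction is immediate in both cases: if $\mathcal{D}=\min_{i\in\mathcal{I}}\mathcal{D}_i$ with each $\mathcal{D}_i$ satisfying the relevant property, then the minimum inherits it. For Positive Homogeneity, $\mathcal{D}(\lambda X)=\min_i\mathcal{D}_i(\lambda X)=\lambda\min_i\mathcal{D}_i(X)=\lambda\mathcal{D}(X)$ for $\lambda\geq0$; for Lower Range Dominance, $\mathcal{D}(X)=\min_i\mathcal{D}_i(X)\leq\mathcal{D}_{i_0}(X)\leq E[X]-\operatorname{ess}\inf X$ for any fixed $i_0$.

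The substance is in the "only if" directions, and here I would split into two strategies. For part (i), rather than re-examining the $\mathcal{D}_Y$ built in Theorem \ref{prp:dev}, I would take the cleanest route: show that if $\mathcal{D}$ is Positive Homogeneous then every Convex deviation measure $\beta\geq\mathcal{D}$ can be replaced, or that the subfamily of generalized deviation measures dominating $\mathcal{D}$ already represents it. The key observation is that a Positive Homogeneous $\mathcal{D}$ is itself generalized precisely when it is also Convex, but in general one must produce for each non-constant $X$ a generalized $\beta\in\mathcal{I}$ with $\beta(X)=\mathcal{D}(X)$. I would revisit the acceptance-set construction: taking $\mathcal{A}_Y=\operatorname{conv}(\{Y-E[Y]+\mathcal{D}(Y)\}\cup\{0\})+\mathbb{R}_+$ as in Theorem \ref{prp:dev}, the induced $\mathcal{D}_Y$ is Convex; when $\mathcal{D}$ is Positive Homogeneous I would argue that replacing the convex hull by the conic hull $\operatorname{cone}(\{Y-E[Y]+\mathcal{D}(Y)\})+\mathbb{R}_+$ produces a Positive Homogeneous (hence generalized) deviation measure that still sandwiches $\mathcal{D}$ and agrees with it at $X=Y$. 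The homogeneity of the Minkowski-type functional follows from the cone structure of this modified $\mathcal{A}_Y$, exactly as the convexity followed from the convex structure before.

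For part (ii), the argument is more delicate because Lower Range Dominance is a pointwise upper bound rather than a scaling law, so it does not automatically survive the acceptance-set construction. The plan is to show that the functional $X\mapsto E[X]-\operatorname{ess}\inf X$ is itself a generalized, hence Convex, deviation measure (it is in fact the Lower Range $LR$ from Example \ref{Exm:devmeas}), and that Lower Range Dominance of $\mathcal{D}$ means precisely $\mathcal{D}\leq LR$. Then for each non-constant $X$ I would define $\mathcal{D}_X^{\mathrm{LRD}}=\min(\mathcal{D}_X,\,LR)$ where $\mathcal{D}_X$ is the Convex deviation from Theorem \ref{prp:dev} achieving $\mathcal{D}_X(X)=\mathcal{D}(X)$; the minimum of two Convex deviation measures need not be Convex, so instead I would intersect acceptance sets, taking $\widetilde{\mathcal{A}}_X=\mathcal{A}_X\cap\mathcal{A}_{LR}$ and letting the new deviation be induced by this intersection, which is Convex as an intersection of Convex sets and Lower Range Dominated because it is contained in $\mathcal{A}_{LR}$. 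One checks $\mathcal{D}(X)\leq\widetilde{\mathcal{D}}_X(X)\leq\mathcal{D}_X(X)=\mathcal{D}(X)$ using $\mathcal{D}(X)\leq LR(X)$.

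The main obstacle I anticipate is verifying, in part (ii), that the intersected acceptance set still induces a bona fide deviation measure with the correct value at $X$ and that the Lower Range Dominance bound genuinely transfers through the Minkowski functional. The identity $\mathcal{D}_{\mathcal{A}}(X)=\inf\{m\colon X+m\in\mathcal{A}\}+E[X]$ makes the upper bound by $LR$ equivalent to a containment of acceptance sets, so the crux is checking that $\mathcal{A}_{LR}=\{X\colon LR(X)\leq E[X]\}$ is the correct acceptance set and that containment in it forces $\widetilde{\mathcal{D}}_X\leq LR$ everywhere, not just at $X$. Once this translation between the pointwise dominance inequality and the acceptance-set inclusion is pinned down, both converse directions close, and the final representation by $\mathcal{I}=\{\beta\colon\beta\text{ is a (PH or LRD) Convex deviation measure and }\beta\geq\mathcal{D}\}$ follows exactly as in Theorem \ref{prp:dev}, since each $\mathcal{D}_X^{\mathrm{PH}}$ or $\widetilde{\mathcal{D}}_X$ lies in the respective $\mathcal{I}$ and attains $\mathcal{D}(X)$.
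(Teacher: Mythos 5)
Your part (i) is essentially the paper's own argument: the paper replaces the convex hull by the convex conic hull $\mathcal{A}_{Y}=convco((\{Y-E[Y]+\mathcal{D}(Y)\}+\mathbb{R}_+)\cup\{0\})+\mathbb{R}_+$, which for a single generator coincides with your $\operatorname{cone}(\{Y-E[Y]+\mathcal{D}(Y)\})+\mathbb{R}_+$, and Positive Homogeneity of $\mathcal{D}_Y$ follows from the cone structure exactly as you say. One detail worth making explicit when you write it out: the containment $\mathcal{A}_Y\subseteq\mathcal{A}_\mathcal{D}$ now involves scalings $k>1$ of the generator, and the estimate $\mathcal{D}(kY)\leq k\mathcal{D}(Y)$ is no longer supplied by Star-Shapedness; it is precisely here that Positive Homogeneity of $\mathcal{D}$ (the hypothesis of the ``only if'' direction) must be invoked.

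Part (ii), however, has a genuine gap: you have inverted the duality between acceptance-set inclusion and the induced functional. First note $\mathcal{A}_{LR}=\{X\colon LR(X)\leq E[X]\}=\{X\colon\essinf X\geq 0\}=L^\infty_+$, and since $\mathcal{D}_\mathcal{A}(X)=\inf\{m\colon X+m\in\mathcal{A}\}+E[X]$ is antitone in $\mathcal{A}$ (a \emph{smaller} acceptance set gives a \emph{larger} deviation), the inclusion $\widetilde{\mathcal{A}}_X=\mathcal{A}_X\cap L^\infty_+\subseteq\mathcal{A}_{LR}$ forces $\widetilde{\mathcal{D}}_X\geq LR$, the exact opposite of Lower Range Dominance; indeed, since both sets are stable under adding $\mathbb{R}_+$, one gets $\widetilde{\mathcal{D}}_X=\max(\mathcal{D}_X,LR)$, and $\mathcal{D}_X$ equals $+\infty$ off the set $\{\lambda(X-E[X]+\mathcal{D}(X))+c\colon\lambda\in[0,1],c\in\mathbb{R}\}$ while $LR$ is finite on all of $L^\infty$, so $\widetilde{\mathcal{D}}_X$ is very far from Lower Range Dominated. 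Your chain $\mathcal{D}(X)\leq\widetilde{\mathcal{D}}_X(X)\leq\mathcal{D}_X(X)$ has the second inequality backwards ($\widetilde{\mathcal{A}}_X\subseteq\mathcal{A}_X$ gives $\widetilde{\mathcal{D}}_X\geq\mathcal{D}_X$), and the attainment at $X$ also fails: the witness $X-E[X]+\mathcal{D}(X)$ belongs to $L^\infty_+$ only if $\mathcal{D}(X)\geq LR(X)$, which under Lower Range Dominance means the degenerate case $\mathcal{D}(X)=LR(X)$. The correct move is the paper's, which is to \emph{enlarge} rather than intersect: take $\mathcal{A}_{Y}=conv((\{Y-E[Y]+\mathcal{D}(Y)\}+L^\infty_+)\cup\{0\})+L^\infty_+$. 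This set contains $L^\infty_+$ and is Monotone, so $X-\essinf X\in\mathcal{A}_Y$ yields $\mathcal{D}_Y(X)\leq E[X]-\essinf X$ directly, while Convexity, $\mathcal{D}_Y\geq\mathcal{D}$ and $\mathcal{D}_X(X)=\mathcal{D}(X)$ survive as in \Cref{prp:dev}; note that $L^\infty_+\subseteq\mathcal{A}_\mathcal{D}$ holds precisely because $\mathcal{D}$ is Lower Range Dominated, so the enlargement costs nothing. The paper additionally observes that for the bare ``only if'' of (ii) the singleton $\{\mathcal{D}\}$ already provides a representing family of lower ranged deviation measures, the construction above being needed only for the canonical-family claim.
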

	
	\begin{proof}
		For (i), if $\mathcal{I}$ is composed by generalized deviation measures, then for any $\lambda\geq 0$ and $X\in L^\infty$ we have that \[\mathcal{D}(\lambda X)=\min\limits_{i\in\mathcal{I}}\lambda\mathcal{D}_i(X)=\lambda\min\limits_{i\in\mathcal{I}}\mathcal{D}_i(X)=\lambda\mathcal{D}(X).\] Thus, $\mathcal{D}$ is Positive Homogeneous. For the converse, let $\mathcal{D}_Y$ be defined as in  \Cref{prp:dev}, but now with \[\mathcal{A}_{Y}=
		convco((\{Y-E[Y]+\mathcal{D}(Y)\}+\mathbb{R}_+)\cup\{0\})+\mathbb{R}_+,\] where convco means the Convex conic hull i.e. $convco (A) = \{k ( X +  Y )\colon X,Y \in A, k \geq 0  \} $. Clearly $\mathcal{A}_Y$ is a cone. Then, for any $X\in L^\infty$ and $\lambda\geq 0$ we get that \begin{align*}
		\mathcal{D}_Y(\lambda X)&=\inf\{m\in\mathbb{R}\colon \lambda X+m\in\mathcal{A}_Y\}+E[\lambda X]\\&=\inf\{\lambda n\in\mathbb{R}\colon \lambda (X+n)\in\mathcal{A}_Y\}+E[\lambda X]\\
		&=\lambda\left( \inf\{ n\in\mathbb{R}\colon  X+n\in\mathcal{A}_Y\}+E[ X]\right) =\lambda\mathcal{D}_Y(X).
		\end{align*}
		Then, each $\mathcal{D}_Y$ is Positive Homogeneous. The facts that $\mathcal{D}(X)=\min\{\mathcal{D}_Y(X)\colon Y \in L^\infty\}$ and $\mathcal{I}=\{\beta\colon L^\infty\to\mathbb{R}_+\cup\{\infty\}\colon\beta\:\text{generalized deviation measure and}\:\beta\geq\mathcal{D}\}$ follow as in  \Cref{prp:dev}.
		
		
		Concerning (ii), if $\mathcal{I}$ is composed by  Lower Range Dominated deviation measures, then for any  $X\in L^\infty$ we have that \[\mathcal{D}( X)=\min\limits_{i\in\mathcal{I}}\lambda\mathcal{D}_i(X)\leq E[X]-\essinf X.\] Thus, $\mathcal{D}$ is Lower Range Dominated. 
		For the converse, note that the  singleton containing only $\mathcal{D}$ itself satisfies \cref{eq:dev_min}.Hence, if $\mathcal{D}$ is  Lower Range Dominated then there is some $\mathcal{I}$ composed by lower ranged deviation measures such that \cref{eq:dev_min} holds.
		 Lastly,  		let $\mathcal{D}_Y$ be defined as in  \Cref{prp:dev}, but now with $\mathcal{A}_{Y}=
		conv((\{Y-E[Y]+\mathcal{D}(Y)\}+L^\infty_+)\cup\{0\})+L^\infty_+$.  Clearly $\mathcal{A}_Y$ is a Monotone set that contains $L^\infty_+$, which implies $X-\essinf X\in \mathcal{A}_Y$. Then, for any $X\in L^\infty$  we get that \begin{align*}
		\mathcal{D}_Y(X)=\inf\{m\in\mathbb{R}\colon X+m\in\mathcal{A}_Y\}+E[ X]\leq E[\lambda X]-\essinf X.
		\end{align*}
		Then, each $\mathcal{D}_Y$ is Lower Range Dominated. The facts that $\mathcal{D}(X)=\min\{\mathcal{D}_Y(X)\colon Y \in L^\infty\}$ and $\mathcal{I}=\{\beta\colon L^\infty\to\mathbb{R}_+\cup\{\infty\}\colon\beta\:\text{Lower Range Dominated Convex deviation measure and}\:\beta\geq\mathcal{D}\}$ follow as in  \Cref{prp:dev}.
	\end{proof}

	\begin{Rmk}\label{rmk:rev}
	The same reasoning for preservation of properties in Corollary \ref{ph,li,lrd} is not true for Law Invariance. We thank an anonymous reviewer for raising this point and providing the concrete example. If $D$ can be written as the minimum of law-invariant convex deviation measures, then $D$ is
	law-invariant and star-shaped. But not all law-invariant and star-shaped $D$ can be written
	as the minimum of law-invariant convex deviation measures. The intuitive reason is that
	law-invariant convex deviation measures respect convex order (under some continuity), so
	by taking a minimum one arrives at a law-invariant star-shaped deviation measure which is
	consistent with convex order. Note that not all law-invariant star-shaped deviation measures
	respect convex order. To see a concrete example without imposing any continuity, let $X$ be uniform on $[-2, 2]$, and
	$Y = (2 - X)1_{\{X>0\}} - (X + 2)1_{\{X<0\}}$, which is also uniform on $[-2, 2]$; $Z := X/2 + Y/2$ is
	distributed on $\{-1, 1\}$ with equal probability. Let $D$ be IQD at level $0.4$ (i.e., $0.4$-quantile
	minus $0.6$-quantile) plus SD (this is a positively homogeneous and proper deviation measure).
	We can easily compute $D(X) = D(Y ) = 4/5 + (4/3)^{1/2} < 2$ and $D(Z) = 2 + (2/3)^{1/2} > D(X).$
	If $D$ is the minimum of some law-invariant convex deviation measures $\{D_i\}_{i \in \mathcal{I}}$, then we must
	have $D(Z) \leq D(X)$, since $D_i(Z) \leq D_i(X)/2 + D_i(Y )/2 = D_i(X)$ for each $i \in \mathcal{I}$. This is a
	contradiction.
\end{Rmk}

	\cite{Moresco2021} show that the Star-Shapedness of the minimum of a family of Convex risk measures is closely related to the behavior of each measure around $0$. The same holds for deviation measures. However, a proper deviation measure is automatically Normalized as Non-Negativity implies $\mathcal{D}(0) = 0$. Hence our result that the minimum of Convex deviation measures is a Star-Shaped deviation measure, with no need for the extra assumption of Normalization found in \cite{Castagnoli2021}. It is possible to write any proper deviation measure as the minimum of a family of deviation measures that are Convex. However, to do so, we must drop Non-Negativity, just like one needs to drop Normalization to write a monetary risk measure as the minimum of Convex risk measures. See \cite{jia21}.
	The next \nameCref{not star} shows this fact. To that, let $\chi_A$ be the characteristic function of $A$, i.e. $  \chi_A (X) =0$ if $X \in A$ and $\infty$ otherwise. Note that $\chi_\mathbb{R}$ defines a trivial proper deviation measure.
	
	\begin{Prp}\label{not star}
		$\mathcal{D}\colon L^\infty\to\mathbb{R}_+\cup\{\infty\}$ is a proper deviation measure if and only if there is a family  of  Translation Insensitive deviation measures  $\{\mathcal{D}_i\}_{i\in\mathcal{I}}$ which respects Convexity and  such that  $\chi_\mathbb{R}\in \{\mathcal{D}_i\}_{i\in\mathcal{I}}$, $\mathcal{D}_i(X) =0$ holds only if $X \in \mathbb{R}$ and  such that the representation holds \begin{equation*}
		\mathcal{D}(X)=\min_{i\in\mathcal{I}}\mathcal{D}_i(X),\:\forall\:X\in L^\infty.
		\end{equation*}
		Moreover, such a family can be chosen as the one composed by the Translation Insensitive deviation measures that fulfill Convexity and dominate $\mathcal{D}$.
	\end{Prp}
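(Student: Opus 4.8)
The plan is to prove the two implications separately, isolating the real content in a single ``indicator of a line'' construction for the forward direction, in the same spirit as \Cref{prp:dev} but without appealing to Star-Shapedness.

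For the implication from the representation to properness, suppose $\mathcal{D}=\min_{i\in\mathcal{I}}\mathcal{D}_i$ for a family as described. Translation Insensitivity of $\mathcal{D}$ follows at once from that of each $\mathcal{D}_i$, as in \Cref{prp:dev}. For Non-Negativity I would verify its two halves separately. On a constant $C$, since $\chi_\mathbb{R}$ belongs to the family with $\chi_\mathbb{R}(C)=0$ and every $\mathcal{D}_i\geq 0$, the minimum gives $\mathcal{D}(C)=0$; this is exactly the purpose of demanding $\chi_\mathbb{R}\in\{\mathcal{D}_i\}_{i\in\mathcal{I}}$. For non-constant $X$, choose $i^{*}$ attaining the minimum (which exists, as it is a minimum); since $\mathcal{D}_{i^{*}}(X)=0$ can hold only for $X\in\mathbb{R}$, we obtain $\mathcal{D}(X)=\mathcal{D}_{i^{*}}(X)>0$. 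Hence $\mathcal{D}$ is a proper deviation measure.

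For the converse, let $\mathcal{D}$ be proper. For each $Y\in L^\infty$ I would set $L_Y\coloneqq\{Y+c\colon c\in\mathbb{R}\}$ and define $\mathcal{D}_Y(X)=\mathcal{D}(Y)$ if $X\in L_Y$ and $\mathcal{D}_Y(X)=\infty$ otherwise. Translation Insensitivity is clear since $L_Y$ is invariant under adding constants; Convexity holds because $L_Y$ is an affine subspace on which $\mathcal{D}_Y$ is constant, so the convexity inequality is an equality when both arguments lie in $L_Y$ and trivial otherwise. Using Translation Insensitivity of $\mathcal{D}$, for $X=Y+c$ one gets $\mathcal{D}_Y(X)=\mathcal{D}(Y)=\mathcal{D}(Y+c)=\mathcal{D}(X)$, while off $L_Y$ domination is automatic; thus $\mathcal{D}_Y\geq\mathcal{D}$ with equality on $L_Y$. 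In particular $\mathcal{D}_X(X)=\mathcal{D}(X)$, so $\mathcal{D}(X)=\min\{\mathcal{D}_Y(X)\colon Y\in L^\infty\}$, the minimum being attained at $Y=X$.

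It remains to check the structural requirements and, with them, the main point. If $Y$ is non-constant then $\mathcal{D}(Y)>0$ and $L_Y$ contains no constant, so $\mathcal{D}_Y$ never vanishes and the condition ``$\mathcal{D}_Y(X)=0$ only if $X\in\mathbb{R}$'' holds vacuously; if $Y$ is constant then $L_Y=\mathbb{R}$ and $\mathcal{D}(Y)=0$, so $\mathcal{D}_Y=\chi_\mathbb{R}$, which both places $\chi_\mathbb{R}$ in the family and verifies the vanishing condition there. For the ``Moreover'' claim I would take $\mathcal{I}$ to be the collection of all Translation Insensitive Convex functionals $\beta$ with $\beta(X)=0$ only if $X\in\mathbb{R}$ and $\beta\geq\mathcal{D}$; each $\mathcal{D}_Y$ and $\chi_\mathbb{R}$ lie in it, every member dominates $\mathcal{D}$, and $\mathcal{D}_X\in\mathcal{I}$ attains $\mathcal{D}(X)$, so the representation persists. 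The only subtlety worth stressing --- and the reason this is stated separately from \Cref{prp:dev} --- is that the $\mathcal{D}_Y$ are deliberately not proper deviation measures: for non-constant $Y$ they equal $+\infty$ rather than $0$ on constants, so full Non-Negativity must be relaxed to the one-sided condition, and re-inserting $\chi_\mathbb{R}$ is precisely what restores $\mathcal{D}(C)=0$ under the minimum. This parallels the need to drop Normalization when writing a monetary risk measure as a minimum of Convex risk measures (cf. \cite{jia21}), and no genuine difficulty arises beyond this bookkeeping.
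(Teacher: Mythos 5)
Your proof is correct and is essentially the paper's own argument: your explicit ``line indicator'' $\mathcal{D}_Y$, equal to $\mathcal{D}(Y)$ on $Y+\mathbb{R}$ and $\infty$ elsewhere, coincides exactly with the paper's $\mathcal{D}_{\mathcal{A}_Y}$ built from the acceptance set $\mathcal{A}_Y=\{Y-E[Y]+\mathcal{D}(Y)\}+\mathbb{R}_+$, and the forward direction via $\chi_{\mathbb{R}}$ and an attained minimum matches the paper as well. You merely state the construction in closed form rather than through the acceptance-set formula, and you spell out a couple of verifications (strict positivity on non-constants, the automatic vacuity of the vanishing condition for dominating measures) that the paper leaves implicit.
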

	
	\begin{proof}
		To see that $\min_{i\in\mathcal{I}}\mathcal{D}_i(X)$ defines a proper deviation measure note that Non-Negativity follows from the presence of $\chi_\mathbb{R}$ in $\mathcal{I}$, i.e. for constant $X$ we have that $0 \leq \min_{i\in\mathcal{I}}\mathcal{D}_i(X) =  \chi_\mathbb{R} (X) =0$. Translation Insensitivity follow the same reasoning as \Cref{prp:dev}. The converse follows the same steps as in the proof of \Cref{prp:dev}, but defining $\mathcal{A}_Y$ as \[ \mathcal{A}_Y = \{ Y-E[Y] + \mathcal{D}(Y) \} + \mathbb{R}_+ .\]
		Translation Insensitivity, Convexity of each $\mathcal{D}_Y$ and that $\mathcal{D}_X(X) = \mathcal{D}(X)$ will follow from the same reasoning. However, we will lose the Star-Shapedness of each $\mathcal{D}_Y$, and consequently, of $\mathcal{D}$, as not all $\mathcal{A}_Y$ may be Star-Shaped. Nevertheless, more importantly, we will also lose Non-Negativity, as for any non-constant $Y$ and constant $c$ we will have that $\mathcal{D}_Y (c) = \infty$. While it follows that $\mathcal{D}_c = \chi_\mathbb{R}$ for any constant $c$.
	\end{proof}

	Based on the Lower Range Dominance, it is possible to obtain an interplay between risk and deviation measures. Theorem 2 of \cite{Rockafellar2006} ensures this for generalized and Convex deviations. Moreover, in \Cref{prp:dev_risk}, we do the same for Star-Shaped deviations. This claim is a generalization of  \cite{Rockafellar2006} Theorem 2.
	
	\begin{Def}\label{def:risk}
		A functional $\rho:L^\infty\rightarrow\mathbb{R}$ is a Star-Shaped risk measure if it has the following properties: 
		
		\begin{enumerate}
			\item Monotonicity: If $X \leq Y$, then $\rho(X) \geq \rho(Y),\:\forall\: X,Y\in L^\infty$.
			\item Translation Invariance: $\rho(X+C)=\rho(X)-C,\:\forall\: X,Y\in L^\infty,\:\forall\:C \in \mathbb{R}$.
			\item Normalization: $\rho(0)=0$.
			\item Star-Shapedness: $\rho(\lambda X)\geq\lambda \rho(X),\:\forall\: X\in L^\infty,\:\forall\:\lambda \geq 1$.
		\end{enumerate}
	\end{Def}

	\begin{Thm}\label{prp:dev_risk}
		We have that:
		\begin{enumerate}
			\item if $\rho\colon L^\infty\to\mathbb{R}$ is a Star-Shaped risk measure such that $\rho(X)>-E[X]$ for any non-constant $X$, then $\mathcal{D}(X)=\rho(X-E[X])$ is a Star-Shaped deviation measure.
			\item if $\mathcal{D}\colon L^\infty\to\mathbb{R}_+\cup\{\infty\}$ is a Star-Shaped deviation measure  such that $\mathcal{D}(X)\leq E[X]-\essinf X$ for any  $X$, then $\rho(X)=-E[X]+\mathcal{D}(X)$ is a Star-Shaped risk measure. 
		\end{enumerate}
	\end{Thm}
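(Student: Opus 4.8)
The plan is to verify the two implications separately, handling (i) by direct substitution and (ii) by reducing the only delicate axiom, Monotonicity, to the Convex case through the representation of \Cref{prp:dev} and its Lower-Range-Dominated refinement.

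For (i), I would check the three defining properties of a Star-Shaped deviation measure for $\mathcal{D}(X)=\rho(X-E[X])$. Translation Insensitivity is immediate because $(X+C)-E[X+C]=X-E[X]$, so $\mathcal{D}(X+C)=\rho(X-E[X])=\mathcal{D}(X)$. Star-Shapedness follows from $\lambda X-E[\lambda X]=\lambda(X-E[X])$ together with Star-Shapedness of $\rho$: for $\lambda\geq 1$, $\mathcal{D}(\lambda X)=\rho(\lambda(X-E[X]))\geq\lambda\rho(X-E[X])=\lambda\mathcal{D}(X)$. For Non-Negativity, a constant $X=c$ gives $\mathcal{D}(c)=\rho(0)=0$ by Normalization, while for non-constant $X$ the centered variable $Z=X-E[X]$ is non-constant with $E[Z]=0$, so the standing hypothesis $\rho(Z)>-E[Z]$ yields $\mathcal{D}(X)=\rho(X-E[X])>-E[X-E[X]]=0$. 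Since $\rho$ is real-valued, $\mathcal{D}$ takes values in $\mathbb{R}_+$, so $\mathcal{D}$ is a Star-Shaped deviation measure.

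For (ii), the structural axioms are again direct: Normalization holds since $\rho(0)=-E[0]+\mathcal{D}(0)=0$; Translation Invariance follows from Translation Insensitivity of $\mathcal{D}$, as $\rho(X+C)=-E[X]-C+\mathcal{D}(X)=\rho(X)-C$; and Star-Shapedness follows from $\mathcal{D}(\lambda X)\geq\lambda\mathcal{D}(X)$ and linearity of the expectation, giving $\rho(\lambda X)=-\lambda E[X]+\mathcal{D}(\lambda X)\geq\lambda\rho(X)$ for $\lambda\geq 1$. Note also that Lower Range Dominance forces $\mathcal{D}(X)\leq E[X]-\essinf X<\infty$ for $X\in L^\infty$, so $\rho$ is real-valued, as required by \Cref{def:risk}.

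The main obstacle is Monotonicity, which I do not expect to follow from Translation Insensitivity, Star-Shapedness and Lower Range Dominance by elementary manipulation: with no convexity or subadditivity at hand there is no way to compare $\rho(X)$ and $\rho(Y)$ for $X\leq Y$ when $Y-X$ is non-constant. My plan is to import convexity through the representation. Since $\mathcal{D}$ is Star-Shaped and Lower Range Dominated, Corollary~\ref{ph,li,lrd}(ii) provides $\mathcal{D}=\min_{i\in\mathcal{I}}\mathcal{D}_i$ with each $\mathcal{D}_i$ a Convex, Lower Range Dominated deviation measure, and pulling the affine term $-E[\cdot]$ through the attained minimum yields $\rho=\min_{i\in\mathcal{I}}\rho_i$ with $\rho_i(X)=-E[X]+\mathcal{D}_i(X)$. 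By Theorem 2 of \cite{Rockafellar2006}, applied in its Convex form (the correspondence between Lower Range Dominance and Monotonicity), each $\rho_i$ is a Convex, hence Monotone, risk measure. Monotonicity is stable under minima: if $X\leq Y$ then for every $i$ one has $\rho_i(X)\geq\rho_i(Y)\geq\min_{j\in\mathcal{I}}\rho_j(Y)=\rho(Y)$, whence $\rho(X)=\min_{i\in\mathcal{I}}\rho_i(X)\geq\rho(Y)$. I expect this minimum-of-Convex reduction to be the crux, and I would take care to confirm that the minimum of Corollary~\ref{ph,li,lrd} commutes with the affine shift and that the Convex version of Rockafellar's correspondence is exactly the one being invoked.
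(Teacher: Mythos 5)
Your proof is correct, and your part (i) coincides with the paper's argument essentially verbatim. For part (ii) the two proofs share the same engine---the representation of $\mathcal{D}$ as a pointwise minimum of Convex deviation measures---but deploy it differently on the one nontrivial axiom, Monotonicity. The paper argues inline and self-contained: for $X\geq Y$ and $\lambda\in(0,1)$ it writes $X=\lambda Y+(1-\lambda)Z$ with $Z=(X-\lambda Y)/(1-\lambda)\geq X$, applies Convexity of each $\mathcal{D}_i$ together with the bound $\mathcal{D}_i(Z)\leq E[Z]-\essinf Z$, arrives at $\rho(X)\leq\lambda\rho(Y)-(1-\lambda)\essinf X$, and lets $\lambda\uparrow 1$. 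You instead modularize: you take the Lower Range Dominated family from Corollary \ref{ph,li,lrd}(ii), pass the affine term through the attained minimum to get $\rho=\min_{i}\rho_i$ with $\rho_i=-E[\cdot]+\mathcal{D}_i$, invoke the Convex form of the correspondence in Theorem 2 of \cite{Rockafellar2006} for Monotonicity of each $\rho_i$, and observe that Monotonicity survives pointwise minima. Two remarks on the comparison. First, your explicit appeal to Corollary \ref{ph,li,lrd}(ii) is at one point cleaner than the paper's own write-up: the paper's displayed chain bounds $\mathcal{D}_i(Z)$ by $E[Z]-\essinf Z$ while citing only Theorem \ref{prp:dev}, whose canonical family (all Convex deviations dominating $\mathcal{D}$) need not be Lower Range Dominated; the step is repaired exactly by the LRD family you use, and there is no circularity since that corollary depends only on Theorem \ref{prp:dev}. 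Second, your route leans on the Convex version of the cited correspondence; the paper asserts this version in the surrounding text, but the theorem as stated in \cite{Rockafellar2006} concerns generalized (positively homogeneous) deviations, so a fully self-contained version of your proof would re-prove the Convex case---which is done precisely by the $Z$-splitting limit argument the paper runs directly on the minimum. A small plus: you note that Lower Range Dominance makes $\rho$ real-valued, a point the paper leaves implicit.
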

	\begin{proof}
		For (i), Translation Insensitivity follows by $\mathcal{D}(X+c)=\rho(X+c-E[X]-c)=\mathcal{D}(X)$. For Star-Shapedness, let $\lambda\geq 1$, then $\mathcal{D}(\lambda X)=\rho(\lambda(X-E[X]))\geq\lambda(\rho(X)-E[X])=\lambda\mathcal{D}(X)$. For Non-Negativity, if $c\in\mathbb{R}$ we have that $\mathcal{D}(c)=\rho(c-E(c))=\rho(0)=0$. If $X$ is non-constant, then $\mathcal{D}(X)=\rho(X)-(-E[X])>0$.
		
		Regarding (ii), Translation Invariance follows as $\rho(X+c)=-c-E[X]+\mathcal{D}(X+c)=\rho(X)-c$. Normalization is as $\rho(0)=-E[0]+\mathcal{D}(0)=0$. For Star-Shapedness, let $\lambda\geq 1$. Then $\rho(\lambda X)=-E[\lambda X]+\mathcal{D}(\lambda X)\leq\lambda(-E[X]+\mathcal{D}(X))=\lambda\rho(X)$. 
		For Monotonicity, let $X\geq Y$. Then for any $\lambda\in(0,1)$ there is $Z\geq X$ such that $X=\lambda Y+(1-\lambda )Z$. By \Cref{prp:dev} we have that $\mathcal{D}$ is the point-wise minimum of a family of Convex deviation measures. Thus, we get
		\begin{align*}
		\rho(X)&\leq -E(\lambda Y+(1-\lambda Z))+\min\limits_{\mathcal{I}}\left\lbrace \lambda\mathcal{D}_i(Y)+(1-\lambda)\mathcal{D}_i(Z)\right\rbrace \\&\leq -\lambda E[Y]-(1-\lambda) E[Z]+(1-\lambda)(E[Z]-\essinf Z)+\lambda\min\limits_{\mathcal{I}}\mathcal{D}_i(Y)\\
		&\leq \lambda(E[-Y]+\min\limits_{\mathcal{I}}\mathcal{D}_i(Y))-(1-\lambda)\essinf X=\lambda \rho(Y)-(1-\lambda)\essinf X.	
		\end{align*}
		Since for any $\lambda\in(0,1)$ there is $Z\geq X$ that satisfies the inequality, we then get that \[\rho(X)\leq\lim\limits_{\lambda\uparrow 1}\left( \lambda\rho(Y)-(1-\lambda)\essinf X\right) =\rho(Y).\]
	\end{proof}
	
	\begin{Rmk}
		In the conditions of the last \nameCref{prp:dev_risk}, both $\rho$ and $\mathcal{D}$ also inherit some properties such as lower semicontinuity and Law Invariance. Furthermore, one can replace the expectation in those formulations for another Star-Shaped risk measure $\mu$ as long as $\mu(X)+\mathcal{D}(X)\leq-\essinf X$ for any $X\in L^\infty$. This property is called Limitedness, and such composition is studied in \cite{Righi2018a}. Furthermore, we have  in the context of  \Cref{prp:dev} and \Cref{prp:dev_risk} an interplay of acceptance sets. Let $\mathcal{D}$ be some Star-Shaped deviation measure represented under $\mathcal{I}$ and generating the Star-Shaped risk measure $\rho$. The acceptance set for monetary risk measures is traditionally defined as $ \mathcal{A}_\rho = \{X\in L ^\infty\colon \rho(X)\leq 0\}$. Then it is easy to observe  that $\mathcal{A}_\mathcal{D}=\cup_{\mathcal{I}}\mathcal{A}_{\mathcal{D}_i}=\mathcal{A}_{\rho}$.
	\end{Rmk}

	
We have as a direct corollary that under Lower Range Dominance it is possible to extend the last claim in \Cref{accep set}. 
	
\begin{Crl}
Let $\mathcal{D}\colon L^\infty\to\mathbb{R}_+\cup\{\infty\}$ be a Lower Range Dominated Star-Shaped deviation measure and consider the notation in \Cref{accep set}. Then $cl(\mathcal{A})=\mathcal{A}_{\mathcal{D}_\mathcal{A}}$.
\end{Crl}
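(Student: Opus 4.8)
The plan is to reduce the statement to two facts about $\mathcal{A}_{\mathcal{D}_\mathcal{A}}$: that it coincides with the canonical acceptance set $\mathcal{A}_\mathcal{D}=\{X\colon\mathcal{D}(X)\leq E[X]\}$, and that this set is norm-closed. First I would observe that, since $\mathcal{A}$ generates $\mathcal{D}$ in the sense of \Cref{accep set}(iii), we have $\mathcal{D}_\mathcal{A}=\mathcal{D}$, so applying the acceptance-set operator to $\beta=\mathcal{D}_\mathcal{A}=\mathcal{D}$ gives $\mathcal{A}_{\mathcal{D}_\mathcal{A}}=\{X\colon\mathcal{D}_\mathcal{A}(X)\leq E[X]\}=\mathcal{A}_\mathcal{D}$. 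The inclusion $\mathcal{A}\subseteq\mathcal{A}_{\mathcal{D}_\mathcal{A}}$ is already furnished by \Cref{accep set}, so once I show that $\mathcal{A}_\mathcal{D}$ is closed I immediately obtain $cl(\mathcal{A})\subseteq cl(\mathcal{A}_{\mathcal{D}_\mathcal{A}})=\mathcal{A}_{\mathcal{D}_\mathcal{A}}$, which is one of the two required inclusions.

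The engine for closedness is the interplay with risk measures. Because $\mathcal{D}$ is Lower Range Dominated, \Cref{prp:dev_risk}(ii) shows that $\rho(X)=-E[X]+\mathcal{D}(X)$ is a Star-Shaped risk measure, and by construction $\mathcal{A}_\mathcal{D}=\{X\colon\rho(X)\leq 0\}=\mathcal{A}_\rho$. Now Monotonicity together with Translation Invariance makes $\rho$ automatically $1$-Lipschitz for the supremum norm: for any $X,Y$, writing $d=\lVert X-Y\rVert_\infty$ one has $Y-d\leq X\leq Y+d$, hence $\rho(Y)-d\leq\rho(X)\leq\rho(Y)+d$. Thus $\rho$ is norm-continuous and $\mathcal{A}_\rho=\mathcal{A}_\mathcal{D}$ is a closed sublevel set. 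This is exactly where Lower Range Dominance is indispensable: without it $\rho$ need not be Monotone, and since by \Cref{prp:dev} the map $\mathcal{D}$ is only a minimum of Convex (hence continuous) deviations, it is merely upper semicontinuous, so $\mathcal{A}_\mathcal{D}$ would be a union of closed convex sets and could fail to be closed.

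For the reverse inclusion $\mathcal{A}_{\mathcal{D}_\mathcal{A}}\subseteq cl(\mathcal{A})$ I would exploit that $\mathcal{A}_\mathcal{D}=\mathcal{A}_\rho$ is upward closed, again by Monotonicity of $\rho$. For the canonical choice $\mathcal{A}=\mathcal{A}_\mathcal{D}$ this inclusion is automatic, as then $\mathcal{A}_{\mathcal{D}_\mathcal{A}}=\mathcal{A}_\mathcal{D}=\mathcal{A}\subseteq cl(\mathcal{A})$, and combined with the closedness above this already yields $cl(\mathcal{A})=\mathcal{A}_{\mathcal{D}_\mathcal{A}}$. In general, given $X$ with $m_*\coloneqq\mathcal{D}(X)-E[X]=\inf\{m\colon X+m\in\mathcal{A}\}\leq 0$, I would pick $m_0\downarrow m_*$ with $X+m_0\in\mathcal{A}$; upward closure promotes these to $X+m\in\mathcal{A}$ for all $m\geq m_0$, and letting $m_0\downarrow m_*\leq 0$ forces $X\in cl(\mathcal{A})$, separating the case $m_*=0$ (where $X+m_0\to X$) from $m_*<0$ (where already $X\in\mathcal{A}$). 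The main obstacle is precisely this reverse inclusion: passing from the infimal-shift characterization of $\mathcal{A}_{\mathcal{D}_\mathcal{A}}$ back into $cl(\mathcal{A})$ when the infimal shift $m_*$ is strictly negative requires the upward closure of $\mathcal{A}$, which is supplied by Monotonicity of $\rho$ and thus hinges, once more, on Lower Range Dominance; for a generating set that is not upward closed one only gets $cl(\mathcal{A})\subseteq\mathcal{A}_{\mathcal{D}_\mathcal{A}}$, so the equality is genuinely tied to this monotonicity.
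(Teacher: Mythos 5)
Your overall strategy coincides with the paper's: use Lower Range Dominance and \Cref{prp:dev_risk}(ii) to pass to the monetary risk measure $\rho(X)=-E[X]+\mathcal{D}(X)$, observe $\mathcal{A}_{\mathcal{D}_\mathcal{A}}=\mathcal{A}_\mathcal{D}=\{X\colon\rho(X)\leq0\}$, and reduce the corollary to the classical statement that the acceptance set of $\rho_\mathcal{A}(X)=\inf\{m\in\mathbb{R}\colon X+m\in\mathcal{A}\}$ is the closure of $\mathcal{A}$; where the paper simply cites Proposition 4.3 of \cite{Follmer2016}, you reprove it by hand. Your first half is correct and self-contained: the $1$-Lipschitz estimate from Monotonicity plus Translation Invariance, the resulting closedness of $\mathcal{A}_\rho$, and the inclusion $\mathcal{A}\subseteq\mathcal{A}_{\mathcal{D}_\mathcal{A}}$ from \Cref{accep set} correctly yield $cl(\mathcal{A})\subseteq\mathcal{A}_{\mathcal{D}_\mathcal{A}}$.

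The gap is in the reverse inclusion. Monotonicity of the functional $\rho$ makes its sublevel set $\mathcal{A}_\rho=\mathcal{A}_\mathcal{D}$ upward closed, but it says nothing about the generating set $\mathcal{A}$, and your step ``upward closure promotes these to $X+m\in\mathcal{A}$ for all $m\geq m_0$'' silently applies upward closedness to $\mathcal{A}$ itself. This cannot be repaired at the stated level of generality, because the inclusion genuinely fails for non-monotone Star-Shaped generating sets: take $\mathcal{D}(X)=E[X]-\essinf X$ (the Lower Range, which is Star-Shaped and Lower Range Dominated) and $\mathcal{A}=\{X\in L^\infty\colon X\geq0,\ \essinf X=0\}$. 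This $\mathcal{A}$ is Star-Shaped, and $\{m\in\mathbb{R}\colon X+m\in\mathcal{A}\}=\{-\essinf X\}$, so $\mathcal{D}_\mathcal{A}=\mathcal{D}$ and the notation of \Cref{accep set} applies; yet $\mathcal{A}$ is norm closed (since $X\mapsto\essinf X$ is $1$-Lipschitz), while $\mathcal{A}_{\mathcal{D}_\mathcal{A}}=\{X\colon\essinf X\geq0\}=L^\infty_+$, so for instance $1\in\mathcal{A}_{\mathcal{D}_\mathcal{A}}\setminus cl(\mathcal{A})$. Your closing sentence correctly identifies upward closedness of $\mathcal{A}$ as the needed ingredient, but it is not ``supplied by Monotonicity of $\rho$''; it is an additional hypothesis. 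The paper's proof sidesteps this only because the cited result of \cite{Follmer2016} carries the standing assumption that the acceptance set is monotone ($X\in\mathcal{A}$, $Y\geq X$ imply $Y\in\mathcal{A}$), so the corollary should be read with that assumption, or with the canonical choice $\mathcal{A}=\mathcal{A}_\mathcal{D}$, which is monotone and closed under Lower Range Dominance and for which, as you note, the equality is immediate. With monotonicity of $\mathcal{A}$ explicitly assumed, your $m_0\downarrow m_*$ argument is exactly the mechanism behind the paper's citation and completes the proof.
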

\begin{proof}
	Under Lower Range Dominance, we get by \Cref{prp:dev_risk} that $\rho(X)=-E[X]+\mathcal{D}(X)$ defines a Star-Shaped risk measure for any Star-Shaped deviation $\mathcal{D}$. In particular, $\mathcal{D}_\mathcal{A}(X)=\rho_\mathcal{A}(X)+E[X]$. Then, Proposition 4.3 in \cite{Follmer2016} assures that $cl(\mathcal{A})=\{X\in L^\infty\colon\rho_\mathcal{A}(X)\leq 0 \}$. Hence, $cl(\mathcal{A})=\mathcal{A}_{\mathcal{D}_\mathcal{A}}$.
\end{proof}

	In the context of last \nameCref{prp:dev} there is also an interplay between increasing Convex order for $\rho$ and Convex order for $\mathcal{D}$, that allows us to prove results that mimic Theorems 11 and 12 in \cite{Castagnoli2021}.

	\begin{Crl}\label{Thm:LI2}
		Let the probability space be atomless and $\mathcal{D}\colon L^\infty\to\mathbb{R}_+\cup\{\infty\}$ a Lower Range Dominated Star-Shaped deviation measure. Then:
		\begin{enumerate}
			\item $\mathcal{D}$ is
			Law Invariant if and only if there is a Star-Shaped set $G$ of non-increasing functions  $g\colon(0,1)\to\mathbb{R}$ with $g(1^-)\geq 0$ such that	\begin{equation}\label{eq:dual18}
			\mathcal{D}(X)=\inf_{g\in G}\sup\limits_{\alpha\in(0,1)}\left\lbrace VaR^\alpha(X-E[X])-g(\alpha) \right\rbrace,\:\forall\:X\in L^\infty. 
			\end{equation}
			\item $\mathcal{D}$ is Convex order consistent if and only if  there is a Star-Shaped set $G$ of non-increasing functions  $g\colon(0,1)\to\mathbb{R}$ with $g(1^-)\geq 0$ such that
			\begin{equation}\label{eq:dual17}
			\mathcal{D}(X)=\inf_{g\in G}\sup\limits_{\alpha\in(0,1)}\left\lbrace ES^\alpha(X-E[X])-g(\alpha) \right\rbrace ,\:\forall\:X\in L^\infty. 
			\end{equation}
		\end{enumerate} 
	\end{Crl}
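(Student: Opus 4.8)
The plan is to push the whole statement through the correspondence of \Cref{prp:dev_risk} and then read off the representations from Theorems 11 and 12 of \cite{Castagnoli2021}. Since $\mathcal{D}$ is Lower Range Dominated and Star-Shaped, \Cref{prp:dev_risk}(ii) gives that $\rho(X)=-E[X]+\mathcal{D}(X)$ is a Star-Shaped risk measure, and Translation Invariance yields the pivotal identity $\mathcal{D}(X)=\rho(X-E[X])$ for all $X\in L^\infty$. In the reverse direction, starting from a functional $\rho(Y)=\inf_{g\in G}\sup_{\alpha\in(0,1)}\{VaR^\alpha(Y)-g(\alpha)\}$ (resp. with $ES^\alpha$), the substitution $Y=X-E[X]$ reproduces exactly the right-hand sides of \eqref{eq:dual18} and \eqref{eq:dual17}. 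Hence the corollary is equivalent to the dual representation of $\rho$ as a Law Invariant (resp. Convex-order consistent) Star-Shaped risk measure, once the relevant properties are shown to transfer across $\mathcal{D}\leftrightarrow\rho$.

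The core of the argument is therefore the transfer of properties. For (i), Law Invariance is transparent: if $X$ and $Y$ share a distribution then $E[X]=E[Y]$ and $X-E[X]\sim Y-E[Y]$, so $\mathcal{D}(X)=\rho(X-E[X])=\rho(Y-E[Y])=\mathcal{D}(Y)$, and the same identity run backwards transfers Law Invariance of $\rho$ to $\mathcal{D}$. For (ii) I would exploit that Convex order forces equality of means: if $X$ and $Y$ are ordered in Convex order then $E[X]=E[Y]$, so $\mathcal{D}(X)=\rho(X)+E[X]$ and $\mathcal{D}(Y)=\rho(Y)+E[Y]$ differ from $\rho(X)$ and $\rho(Y)$ by one and the same constant. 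Consequently $\mathcal{D}(X)\le\mathcal{D}(Y)$ if and only if $\rho(X)\le\rho(Y)$, which is precisely the announced interplay between Convex order consistency of $\mathcal{D}$ and consistency of $\rho$ with the (increasing) Convex order, and is what singles out $ES^\alpha$ rather than $VaR^\alpha$ as the correct building block in \eqref{eq:dual17}.

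With the properties aligned, I apply Theorem 11 of \cite{Castagnoli2021} to the Law Invariant Star-Shaped risk measure $\rho$ to obtain its $VaR$-based inf-sup representation over a Star-Shaped family $G$ of non-increasing functions with $g(1^-)\ge0$, and Theorem 12 to the order-consistent case to obtain the analogous $ES$-based representation. Substituting $Y=X-E[X]$ yields \eqref{eq:dual18} and \eqref{eq:dual17}. For the converse implications, the displayed inf-sup defines, through the same substitution, a Star-Shaped risk measure $\rho$ of the required type; \Cref{prp:dev_risk}(i) then returns $\mathcal{D}(X)=\rho(X-E[X])$ as a Star-Shaped deviation measure that inherits Law Invariance (resp. Convex order consistency), the Star-Shapedness of $G$ encoding the Star-Shapedness of $\rho$ and the normalization $g(1^-)\ge0$ encoding the behaviour at constants.

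The step I expect to be the main obstacle is the property transfer of the second paragraph, specifically pinning down exactly which order-consistency of the monetary $\rho$ matches Convex order consistency of $\mathcal{D}$. The delicate point is that the $-E[X]$ term reverses the role of the mean, so one must rely on the mean-preserving nature of Convex order, and one must check that Lower Range Dominance, which is exactly what makes $\rho$ Monotone by \Cref{prp:dev_risk}(ii), is used in the right place; a cautionary check is that the purely mean-based $\rho=-E[\cdot]$ (which arises only in the degenerate limit $\mathcal{D}\to0$) is Convex-order consistent yet fails the stronger increasing-Convex-order consistency, so the deviation's Non-Negativity must genuinely enter. A secondary obstacle is the boundary bookkeeping in the converse: verifying from the conditions on $G$ that the inf-sup is $0$ on constants and strictly positive on non-constants, so that the hypothesis $\rho(X)>-E[X]$ of \Cref{prp:dev_risk}(i) holds and the resulting functional is a bona fide deviation measure rather than merely a variability measure.
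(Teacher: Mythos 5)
Your overall route coincides with the paper's: pass to $\rho=-E[\cdot]+\mathcal{D}$ via \Cref{prp:dev_risk}, transfer Law Invariance (resp.\ order consistency), invoke the representation theorems of \cite{Castagnoli2021} for Star-Shaped risk measures, and substitute $X-E[X]$. Part (i) is fine and matches the paper. For the converse implications, note that the corollary assumes at the outset that $\mathcal{D}$ is a Lower Range Dominated Star-Shaped deviation measure, so you only need to check that the right-hand sides of \eqref{eq:dual18} and \eqref{eq:dual17} are law-determined (resp.\ antitone in Convex order, which follows from $E[X]=E[Y]$ and $ES^\alpha(X)\leq ES^\alpha(Y)$); your concern about deriving strict positivity of the inf--sup from the conditions on $G$ is therefore moot --- and rightly so, since the representation alone does not imply Non-Negativity.

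The genuine gap sits in part (ii), at exactly the spot you flagged but did not close. Your transfer argument shows only that $\mathcal{D}(X)\leq\mathcal{D}(Y)$ if and only if $\rho(X)\leq\rho(Y)$ for \emph{equal-mean} pairs, i.e.\ pairs comparable in plain Convex order. But Theorem 12 of \cite{Castagnoli2021} characterizes risk measures consistent with the \emph{increasing} Convex (SSD-type) order, which compares variables with different means; your parenthetical ``(increasing)'' is not earned by the computation, and Convex-order consistency of a monetary functional does not by itself upgrade to increasing-Convex-order consistency. The paper bridges this by centering: Translation Insensitivity gives $\mathcal{D}(X)=\mathcal{D}(X-E[X])$, and one combines the Convex-order comparison of the centered versions with the mean inequality. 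To make this fully airtight one uses the standard decomposition of the increasing Convex order into a first-order leg and a Convex-order leg (there is $Z$ with $X$ and $Z$ comparable stochastically and $Z$, $Y$ comparable in Convex order): the Convex-order leg is your equal-mean computation, while the first-order leg requires Monotonicity of $\rho$ --- supplied precisely by Lower Range Dominance through \Cref{prp:dev_risk}(ii), together with Law Invariance on an atomless space. Tellingly, Lower Range Dominance never enters your transfer argument at all, which signals the missing ingredient. Finally, your ``cautionary check'' is incorrect: $\rho=-E[\cdot]$ \emph{is} consistent with the increasing Convex order, since $f(x)=x$ is increasing and convex and hence such dominance orders the means in the matching direction; it is not a counterexample, and Non-Negativity of $\mathcal{D}$ is not what rescues the argument --- Monotonicity of $\rho$ is.
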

	
	\begin{proof}
		\begin{enumerate}
			\item It is straightforward to verify that \eqref{eq:dual18} defines a law-invariant Star-Shaped deviation measure. Lower Range Dominance follows from the Monotone behavior in $\alpha$ for VaR and the functions in $G$ as \[\mathcal{D}(X)\leq VaR^0(X-E[X])-\sup_{g\in G}g(1-) \leq E[X]-\essinf X,\:\forall\:X\in L^\infty.\] For the converse, we have by \Cref{prp:dev_risk} that $\rho(X)=-E[X]+\mathcal{D}(X),\:X\in L^\infty$ is a Law Invariant Star-Shaped risk measure, which implies it is Monotone regarding to increasing order. Thus, by Theorem 12 of \cite{Castagnoli2021} there is a Star-Shaped set $G$ of non-increasing functions  $g\colon(0,1)\to\mathbb{R}$ with $g(1^-)\geq 0$ such that for any $X\in L^\infty$ that \[\rho(X)=\inf_{g\in G}\sup\limits_{\alpha\in(0,1)}\left\lbrace VaR^\alpha(X)-g(\alpha) \right\rbrace.\] By adding $E[X]$ in both sides of the last equation we get the claim. 
			\item That \eqref{eq:dual18} defines a Lower Range Dominated Law Invariant Star-Shaped deviation measure it is similar to item (i) since $\alpha\to ES^\alpha(X)$ is Monotone for any $X\in L^\infty$. Let $X,Y\in L^\infty$ such that $X\succeq Y$ in Convex order. Then we have that $E[X]=E[Y]$ and $ES^\alpha(X)\leq ES^\alpha(Y)$ for all $\alpha \in (0,1)$. This directly implies  $\mathcal{D}(X)\leq \mathcal{D}(Y)$. For the converse, we have by  \Cref{prp:dev_risk} that $\rho(X)=-E[X]+\mathcal{D}(X),\:X\in L^\infty$ is a Law Invariant Star-Shaped risk measure. We claim that $\rho$ is consistent with respect to increasing Convex order. Let $X\succeq Y$ in such order. Then $X-E[X]\succeq Y-E[Y]$ in Convex order. Thus, $E[X]\geq E[Y]$ and $\mathcal{D}(X)=\mathcal{D}(X-E[X])\leq \mathcal{D}(Y-E[Y])=\mathcal{D}(Y)$. This directly implies $\rho(X)\leq \rho(Y)$. Now, by Theorem 12 of \cite{Castagnoli2021} there is a Star-Shaped set $G$ of non-increasing functions  $g\colon(0,1)\to\mathbb{R}$ with $g(1^-)\geq 0$ such that for any $X\in L^\infty$,
			  \[\rho(X)=\inf_{g\in G}\sup\limits_{\alpha\in(0,1)}\left\lbrace ES^\alpha(X)-g(\alpha) \right\rbrace.\] By adding $E[X]$ in both sides of the last equation we get the claim.
		\end{enumerate}
	\end{proof}

It is shown in \cite{Mao2020} that minimimum of Law Invariant Convex risk measures are precisely the Second Stochastic Dominance (SSD) consistent 
risk measures. We now show that a similar result holds for  Lower Range Dominated Star-Shaped deviation measures where SSD should be replaced by Convex order.

\begin{Crl}\label{Thm:LI3}
Let the probability space be atomless. $\mathcal{D}\colon L^\infty\to\mathbb{R}_+\cup\{\infty\}$ is a Convex order consistent Lower Range Dominated Star-Shaped deviation measure there exists some $\mathcal{I}$  composed by Law Invariant convex deviation measures such that \cref{eq:dev_min} holds.
\end{Crl}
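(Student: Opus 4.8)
The plan is to pass to the risk-measure side via \Cref{prp:dev_risk} and then use the Kusuoka-type representation already invoked in \Cref{Thm:LI2} in the spirit of \cite{Mao2020}. Throughout set $\rho(X)=-E[X]+\mathcal{D}(X)$; by \Cref{prp:dev_risk} this is a Star-Shaped risk measure as soon as $\mathcal{D}$ is Lower Range Dominated and Star-Shaped, and it satisfies $\rho(X)\le-\essinf X$. The bookkeeping fact I would record first is that $\mathcal{D}$ is Convex order consistent if and only if $\rho$ is consistent with increasing Convex order (SSD): one direction is the computation in the proof of \Cref{Thm:LI2}(ii), and the converse holds because increasing Convex functions form a subclass of Convex functions, so Convex order refines SSD on equal-mean pairs.

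For the implication from the representation $\mathcal{D}=\min_{i\in\mathcal{I}}\mathcal{D}_i$, Star-Shapedness is immediate from \Cref{prp:dev}, Law Invariance passes to pointwise minima, and each Law Invariant Convex deviation is Convex order consistent on an atomless space: if $X\succeq Y$ in Convex order then $X$ has the law of a conditional expectation of $Y$ (Strassen), and a conditional Jensen inequality together with Law Invariance gives $\mathcal{D}_i(X)\le\mathcal{D}_i(Y)$, which is the continuity-type fact noted in \Cref{rmk:rev}; hence the minimum is Convex order consistent. The only property not inherited from \emph{arbitrary} Law Invariant Convex deviations is Lower Range Dominance --- indeed the standard deviation $SD$ is Law Invariant and Convex but not Lower Range Dominated --- so I would make this direction genuinely inverse to the other by taking $\mathcal{I}$ to consist of Lower Range Dominated blocks, for which $\mathcal{D}_i(X)\le E[X]-\essinf X$ for all $i$ forces the same bound on the minimum.

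For the converse I would begin from the representation \eqref{eq:dual17} of \Cref{Thm:LI2}(ii), $\mathcal{D}(X)=\inf_{g\in G}\sup_{\alpha}\{ES^\alpha(X-E[X])-g(\alpha)\}$ with $G$ a Star-Shaped family of non-increasing $g$ satisfying $g(1^-)\ge 0$; evaluating at constants and using $\mathcal{D}(c)=0$ forces $g(1^-)=0$, so each $g$ is in fact non-negative, and passing to $\rho$ gives Castagnoli's form $\rho(X)=\inf_{g\in G}\sup_\alpha\{ES^\alpha(X)-g(\alpha)\}$. Imitating the attainment device behind \cite{Mao2020}, for a fixed $X_0$ I would set $g^\ast(\alpha)=(ES^\alpha(X_0)-\rho(X_0))^+$, let $\rho_{g^\ast}(X)=\sup_\alpha\{ES^\alpha(X)-g^\ast(\alpha)\}$, and put $\mathcal{D}_{X_0}(X)=E[X]+\rho_{g^\ast}(X)$. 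Then $\mathcal{D}_{X_0}$ is Law Invariant, Convex and Translation Insensitive; $g^\ast(1^-)=0$ gives $\mathcal{D}_{X_0}(c)=0$; $g^\ast\ge 0$ gives $\rho_{g^\ast}\le-\essinf$, i.e. Lower Range Dominance; and $ES^\alpha(X_0)\ge\rho(X_0)$ for small $\alpha$ (by Lower Range Dominance of $\mathcal{D}$) yields the attainment $\rho_{g^\ast}(X_0)=\rho(X_0)$, i.e. $\mathcal{D}_{X_0}(X_0)=\mathcal{D}(X_0)$. Taking $\mathcal{I}$ to be all Lower Range Dominated Law Invariant Convex deviations dominating $\mathcal{D}$ then gives $\mathcal{D}=\min_{\mathcal{I}}$, the minimum being attained at each $X_0$ by $\mathcal{D}_{X_0}$.

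The step I expect to be the crux is the domination $\rho_{g^\ast}\ge\rho$ (equivalently $\mathcal{D}_{X_0}\ge\mathcal{D}$), which is exactly what upgrades the infimum in \eqref{eq:dual17} into an attained minimum and simultaneously certifies Non-Negativity of $\mathcal{D}_{X_0}$ on non-constants. I would prove it directly from the representation: given $\varepsilon>0$ choose $g_\varepsilon\in G$ with $\sup_\alpha\{ES^\alpha(X_0)-g_\varepsilon(\alpha)\}\le\rho(X_0)+\varepsilon$, so $g_\varepsilon(\alpha)\ge ES^\alpha(X_0)-\rho(X_0)-\varepsilon$; together with $g_\varepsilon\ge 0$ this gives $g_\varepsilon\ge g^\ast-\varepsilon$ pointwise, whence $\rho_{g^\ast}(X)\ge\sup_\alpha\{ES^\alpha(X)-g_\varepsilon(\alpha)\}-\varepsilon\ge\rho(X)-\varepsilon$ for every $X$, and letting $\varepsilon\downarrow 0$ closes the argument. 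The points I would treat carefully are the atomlessness hypothesis (used both in the Strassen/Convex order step and in the ES-representation) and the continuity caveat of \Cref{rmk:rev}.
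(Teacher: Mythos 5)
Your proposal is correct and follows the same overall route as the paper's proof: the forward direction rests on convex-order consistency of Law Invariant Convex deviations on atomless spaces being preserved under minima, and the converse goes through the ES-representation \eqref{eq:dual17} of \Cref{Thm:LI2}(ii) plus a Mao--Wang-type attainment argument. The difference is that the paper outsources the attainment, simply citing Theorem 3.1 of Mao and Wang with the specific star-shaped set $G=\{g_Y(\alpha)=ES^\alpha(Y)\colon E[-Y]+\mathcal{D}(Y)\leq 0\}$, whereas you reconstruct the device explicitly: the clipped function $g^\ast(\alpha)=(ES^\alpha(X_0)-\rho(X_0))^+$ together with your $\varepsilon$-argument ($g_\varepsilon\geq g^\ast-\varepsilon$, hence $\rho_{g^\ast}\geq\rho$) makes the proof self-contained, and your preliminary observation that evaluating \eqref{eq:dual17} at constants forces $g(1^-)=0$, hence $g\geq 0$ for every $g\in G$, is precisely what legitimizes the clipping. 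This buys two things the paper leaves implicit: the domination $\mathcal{D}_{X_0}\geq\mathcal{D}$ certifies Non-Negativity of each block (essential, given the paper's own remark that infima can destroy Non-Negativity), and each $\mathcal{D}_{X_0}$ is itself Lower Range Dominated since $g^\ast\geq 0$ yields $\rho_{g^\ast}(X)\leq\sup_\alpha ES^\alpha(X)=-\essinf X$. The latter point is a genuine sharpening: as you correctly note, Lower Range Dominance is \emph{not} inherited from arbitrary Law Invariant Convex deviations (take $\mathcal{I}=\{SD\}$), so the corollary read as a biconditional needs either your repair (LRD blocks) or the removal of LRD from the left-hand side --- a point the paper's terse statement and proof gloss over. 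One minor imprecision: in the boundary case $\rho(X_0)=-\essinf X_0$ there may be no $\alpha$ with $ES^\alpha(X_0)\geq\rho(X_0)$, so your phrase ``for small $\alpha$'' is not quite right; attainment nevertheless holds because
\begin{equation*}
\rho_{g^\ast}(X_0)=\sup_{\alpha\in(0,1)}\min\left( ES^\alpha(X_0),\rho(X_0)\right) =\min\left( -\essinf X_0,\,\rho(X_0)\right)=\rho(X_0),
\end{equation*}
using Lower Range Dominance of $\mathcal{D}$. Finally, your caution about the Strassen/conditional-expectation step is warranted but matches the paper's own level of rigor, which cites Grechuk et al.\ for convex-order consistency and flags the continuity caveat only in \Cref{rmk:rev}.
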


\begin{proof}
On atomless probability spaces any Law Invariant Convex Deviation is consistent to Convex order, see \cite{Grechuk2009} for instance. It is straightforward to verify that the same is true for the supremum, the infimum, or any convex combination of Law Invariant Convex Deviations. Moreover, the infimimum in representation \eqref{eq:dual17} of item (ii) of Corollary \ref{Thm:LI2} is attained by taking the star-shaped set $G=\{g_Y(\alpha)= ES^\alpha(Y)\colon E[-Y]+D(Y)\leq 0\}$, see Theorem 3.1 in \cite{Mao2020} and \cite {Castagnoli2021} for details. Hence, under such reasoning, the claim is directly obtained. 
\end{proof}

\section*{Acknowledgments}

We would like to thank an anonymous reviewer for helpful comments. In particular, the content in \Cref{rmk:rev} was raised by this anonymous reviewer. We are grateful for the financial support of CNPq (Brazilian Research Council) project number
302614/2021-4.
	\bibliography{ref}
	\bibliographystyle{abbrv}
\end{document}